\documentclass[11pt]{article}
\usepackage[T1]{fontenc}
\usepackage{epsfig}
\newtheorem{theorem}{Theorem}
\newtheorem{proposition}{Proposition}
\newtheorem{lemma}{Lemma}

\newenvironment{proof}{\begin{trivlist}\item[]{\bf Proof:}\rm}{\hfill\rule{2mm}{2mm}\end{trivlist}}

\newcommand {\dfn} {\stackrel{\Delta} {=}}

\newcommand {\bone} {\mbox{\boldmath $1$}}

\newcommand {\bx} {\mbox{\boldmath $x$}}

\newcommand {\bz} {\mbox{\boldmath $z$}}

\newcommand{\calB}{{\cal B}}

\newcommand{\calS}{{\cal S}}
\newcommand{\calT}{{\cal T}}

\newcommand{\calX}{{\cal X}}

\newcommand{\bxh}{\hat{\bx}}
\newcommand{\Xh}{\hat{\mathcal X}}

\begin{document}
\title{Universal Random Coding for Successive Refinement\\
of Individual Sequences Based on Lempel--Ziv Complexity}
\author{Neri Merhav}
\date{}
\maketitle
\thispagestyle{empty}

\begin{center}
The Viterbi Faculty of Electrical and Computer Engineering\\
Technion - Israel Institute of Technology \\
Technion City, Haifa 3200003, ISRAEL \\
E--mail: {\tt merhav@technion.ac.il}\\
\end{center}
\vspace{1.5\baselineskip}
\setlength{\baselineskip}{1.5\baselineskip}

\begin{abstract}
This work extends an earlier proposed universal random-coding ensemble for sample-wise
lossy compression of individual sequences to the two-stage
(successive-refinement) setting --- an extension that is not
quite straightforward, as explained in the sequel. The construction has
three layers. First, a complete, self-contained direct/converse match
in the language of type classes and typical sequences, a two-stage analogue of Rimoldi's
classical characterization (applied in the superalphabet of blocks). 
Second, a realization of the same rates via a
random-coding scheme built from the Lempel--Ziv (LZ78) complexity rather
than the type-class uniform measure the first layer relies on, since an
LZ-based ensemble is universal across source statistics and
distortion measures, carries no block-length parameter in the
codebook itself, and is directly implementable. Third, we give a
fully type-free achievability construction (no joint types of
$\ell$-vectors appear anywhere in its search criterion) and show
it matches the same converse exactly, reaching every
point of the achievable region: 
the same converse from the first layer already binds any code, 
type-free schemes included, so no separate argument is needed to certify the match.
We also describe (without proof) how to
extend the type-free construction to any fixed
number $r>2$ of stages.
Finally, we show this paper's achievable
region is a superset of the one obtained by a companion paper,
which is based on a finite-state modeling approach to a similar two-stage problem,
extending a known single-stage domination result to two stages.
\end{abstract}

\section{Introduction}

This paper is about compressing an arbitrary individual (deterministic)
sequence of data progressively, in two stages: a first,
coarse description that satisfies a
prescribed distortion level $D_1$, followed by a second description
that adds refinement, held to its own prescribed level $D_2$ ---
possibly under a different distortion measure than the first stage's,
as detailed in Section~\ref{sec:setup}. This kind of
progressive compression is natural whenever a
receiver may need to stop listening relatively early (imagine a picture that is first
shown in low resolution and then progressively sharpened) 
and the question this paper addresses is how to do it well,
and how to know when a given scheme is already as good as possible.

A recent paper \cite{merhav233} addressed the
single-stage version of this problem: how to compress a single fixed
sequence, in one shot, as well as possible, without knowing anything
about its statistics in advance. The scheme it proposed is based on
universal random coding: each candidate reproduction codeword $\bxh$ is drawn
independently at random, using a universal probability distribution
proportional to $2^{-LZ(\bxh)}$, where $LZ(\cdot)$ is the
Lempel--Ziv (LZ78) codelength \cite{zivlempel78}
and the index of the first
candidate accurate enough is transmitted; it was shown that no coding
scheme can do substantially better than this, for the given sequence.
The aim of the present paper is to extend
the same random-coding approach to progressive compression with two
stages, and to show that the resulting scheme is, again, essentially as good as
any two-stage scheme could be; this two-stage case is this paper's
actual content, developed in three stages throughout: first entirely in the
language of type classes (Section~\ref{sec:types}), second --- via a random-coding
scheme built from LZ78 complexity (Section~\ref{sec:lz}), and third ---
separately, via a fully type-free construction matching the same
converse (Sections~\ref{sec:typefree}--\ref{sec:tf-converse}). Finally,
Section~\ref{sec:mstage} gives a description (without
a full proof)
of how the type-free mechanism
extends to a general fixed number $r>2$ of stages. A companion paper
\cite{merhav244} studies a similar two-stage compression
problem of individual sequences, but via a different paradigm:
a finite-state lossless encoder fed by a vector quantizer.
The present paper stays within the random-coding approach of
\cite{merhav233} instead, and we show (Section~\ref{sec:relation244})
that this paper's achievable region is a superset of
\cite{merhav244}'s, extending \cite{merhav233}'s own single-stage
domination of any finite-state scheme to two stages.

Extending the lossy compression scheme of \cite{merhav233} 
into two stages might sound routine once
the single-stage version is settled, but it is not --- not because
successive refinement itself harbors any new subtlety (it does not:
both difficulties below already appear in Rimoldi's own, classical
characterization), but because transplanting them to the
individual-sequence, universal setting \cite{merhav233}'s own
framework operates in takes real work --- work that a further
ambition, avoiding any commitment to a type in the search criterion,
makes harder still.

The first is on the achievability side: the coarse description
$\bxh_1$ must also serve as good side information for the second
stage, not merely satisfy the $D_1$ constraint. Rimoldi's own
fix --- commit Stage 1's search to the needed joint behavior, not
mere distortion-feasibility --- transplants with new vocabulary:
type in place of law, the universal LZ measure in place of a
type-class-uniform one (Section~\ref{sec:why-hard}). A fully
type-free construction, though, cannot commit to anything in
advance, and needs a different device: a criterion built from the
actual rate a candidate induces, not from any type it belongs to
(Section~\ref{sec:typefree}).

The second, deeper and less obvious, is on the converse side. At a
single stage, there is one achievable rate and one converse binding
every code at once; at two stages, the region is a union over many
pieces, one per type of joint behavior --- already Rimoldi's own
structure. There, concentration ties a successful code to a single
auxiliary law, so a converse proved at that law already binds it;
with an individual sequence and an arbitrary code, no such
concentration exists, so a converse proved at just one type leaves a
real gap. Section~\ref{sec:types} closes it by checking every type
at once --- needed only once type-freedom is on the table, since a
code committed to one type (Theorems~\ref{thm:ach} and
\ref{thm:final}) is already bound by that single check.

\subsection{Related work}
\label{sec:related}

This work lies at the intersection of three lines of research, and
draws throughout on the classical theory of rate-distortion coding
\cite{berger71,coverthomas06,gallager68}.

\emph{Universal lossy compression (probabilistic sources).} A long
line of work bounds the redundancy achievable when compressing a
random source of unknown statistics, from the $\Theta(\log
n/n)$-order results of \cite{zhangyangwei97,yuspeed93} and the
ergodic-source treatment of \cite{ornsteinshields90}, to
\cite{kontoyiannis00}'s CLT for the redundancy and, with
\cite{kontoyianniszhang02}, a characterization of optimal compression
via the negative log-probability of a distortion sphere --- the
probabilistic-setting analogue of this paper's own sphere-probability
quantity, and the one connection from this line drawn on directly
below. Mahmood and Wagner \cite{mahmoodwagner22a,mahmoodwagner22b} treat codes universal
in both source and distortion measure.

\emph{The individual-sequence approach.} Pioneered by Ziv, assuming
no source statistics at all and imposing finite-state limitations on
the encoder/decoder instead \cite{ziv78,ziv80,zivlempel78}, later
extended to side information \cite{ziv84}. Article
\cite{merhavcohen20} gives the implementation this paper relies on
(Section~\ref{sec:lz}): realizing a random-coding ensemble via an LZ
decompressor fed by random bits. Article \cite{merhavguessing20} studies the
same kind of device for a different purpose --- optimally guessing a
secret sequence via random bits fed into an LZ78 decoder --- and
shows the guessing effort needed is governed by the 
finite-state compressibility of the given sequence. 
Article \cite{merhav233}, which serves as the single-stage basis for this
paper, combines this philosophy with
\cite{kontoyianniszhang02}'s characterization, proposing
the random coding distribution $U(\bxh)\propto2^{-LZ(\bxh)}$ specifically, with a matching converse
(for the vast majority of codewords within a type) and pointwise
achievability. 

\emph{Successive refinement.} Equitz and Cover \cite{equitzcover91}
proved, for memoryless sources, that refining $D_1$ into
$D_2\le D_1$ achieves both rate-distortion optima simultaneously iff
the optimal reproductions satisfy $X-\hat X_2-\hat X_1$ --- necessary
\emph{and} sufficient for the specific pair, via a chain rule built
on El~Gamal and Cover's \cite{elgamalcover82} own multiple-description
achievable region; Koshelev \cite{koshelev94} and Kanlis and Narayan
\cite{kanlisnarayan96} obtained related divisibility and
error-exponent results. Rimoldi \cite{rimoldi94}
characterized the full region as a union over auxiliary
distributions, the template Section~\ref{sec:types} follows.
\cite{merhav244}, this paper's direct predecessor, is discussed in
detail in Section~\ref{sec:relation244}.

\subsection{Why Lempel--Ziv complexity}
\label{sec:motivation}

As mentioned earlier, article \cite{merhav233} already establishes, for the single-stage problem,
that an LZ78-based random-coding ensemble has several properties no
type-restricted construction can match. Section~\ref{sec:types}
builds a purely type-based achievability/converse pair for the
two-stage problem --- a direct translation of \cite{rimoldi94}'s own
classical achievable-region characterization to individual sequences,
included only to supply a matching converse, not as a contribution in
its own right; this paper calls it the type-covering skeleton throughout.
This paper's question is whether an LZ78-based construction can match
that skeleton's rates exactly while also carrying the same several
properties into the two-stage setting.

\begin{enumerate}
\item \emph{Universality across the entire problem specification.} A
scheme built on the random-coding distribution 
$U_{Q^\ell}$, which is the uniform distribution across a certain type
class $Q^\ell$ --- an empirical distribution over blocks
of length $\ell$
--- requires commitment to a fixed $Q^\ell$ \emph{before} the codebook is built, and
$Q^\ell$ in turn must depend on everything the problem specifies: the
type of the source sequence, both distortion measures, and
both budgets. Also,
Stage~2 needs its own, analogous conditional-type object.
All this is completely avoided by the use of the universal LZ random-coding
distribution and its conditional counterpart for the second stage, which are
universally asymptotically optimal across the type of the source sequences,
the two distortion measures and the two distortion levels.
\item \emph{No commitment to a block length in the codebook
itself.} The choice of $U_{Q^\ell}$ requires also commitment to a certain
choice of $\ell$, and so, changing $\ell$
amounts to regenerating the codebook. By contrast, 
the LZ universal measure references no $\ell$ at all, so
the same codebook serves every choice, and adjusting $\ell$ costs
nothing beyond changing what is searched for.
\item \emph{Actual implementability.} The type-class uniform distribution $U_{Q^\ell}$ is a proof
device: covering-codebook existence is established probabilistically,
with no suggestion of how to efficiently draw from it. On the other hand, 
the LZ78 algorithm is real
and widely deployed; its ensemble is realized concretely by feeding
shared random bits into an LZ78 decompressor \cite{merhavcohen20}
(Section~\ref{sec:lz}'s explicit protocol).
\end{enumerate}

Section~\ref{sec:lz} shows all three properties survive the extension
to two stages, with the same rate Section~\ref{sec:types}'s skeleton
achieves (Theorem~\ref{thm:ach}) matched exactly --- none of them
cost a weaker guarantee. This buys one further thing beyond the three
items above. Section~\ref{sec:types}'s own construction needs a
different codebook for every point of Section~\ref{sec:ach-region}'s
frontier; Section~\ref{sec:typefree}'s type-free construction, built
from the same pair $(U,V(\cdot|\bxh_1))$ but referencing no target
type at all, lets a single, fixed ensemble realize every point of
the achievable region simultaneously, with which point obtained
selected purely by the search --- though this search criterion pays
a real price for the convenience, checked against the full remaining
problem for each candidate
tried, unlike the type-committed route's simpler, single-candidate check
(Section~\ref{sec:tf-converse}'s own closing discussion).

None of this is automatic. Progressive decodability --- the first
$L_1(\bx)$ bits alone (Stage~1's own codelength) must already give a
valid $D_1$-reconstruction, before the next $L_2(\bx)$ bits
(Stage~2's incremental codelength) exist --- means $\bxh_1$ must also
serve as good side information for Stage~2, not merely satisfy $D_1$
on its own. Section~\ref{sec:types} explains why, in a classical
argument tracing back to Rimoldi's own characterization;
Section~\ref{sec:why-hard} confirms the same requirement survives,
unchanged, once Stage~1 draws from the unrestricted $U$ instead.

\subsection{Relation to the finite-state-encoder approach of
\cite{merhav244}}
\label{sec:relation244}

Article \cite{merhav244} solves the identical two-stage problem, but differs
from this paper in both modeling and achievability. Its encoder model
comprises a general vector quantizer (VQ) followed by a finite-state
lossless encoder operating on the reconstruction, with convergence
only after a double limit ($n\to\infty$, then $q\to\infty$, $q$ being the
number of states); once that limit is taken, however, the resulting
guarantee holds for each individual sequence, not merely on average
over some ensemble.
This paper's encoder, on the other hand, carries no finite-state restriction at all.
This trade is already known to be favorable at the
single-stage level: Article \cite{merhav233} already shows, even
there, that the unrestricted route is never worse --- and in
general, strictly better --- than the VQ-plus-finite-state-encoder
paradigm, which in turn yields the minimum LZ compression ratio across the $D$-ball
around the given source sequence.
The intuitive reason is that the finite-state compressor and the
LZ78 compressor alike act on the VQ's own output, yet neither is
built to exploit the fact that this output --- the actual input to
the lossless encoder --- is confined to a sparse subset of
$\Xh^n$, since not every sequence there can arise as VQ output. This
sparsity goes unexploited, creating considerable room for
improvement, exploited in \cite{merhav233}.
Likewise, the achievable rate region of 
\cite{merhav244} is subsumed by that of the present paper, as
the converse of \cite{merhav244} is also framed in the encoder model of a VQ
followed by $q$-state encoders. 

\subsection{Outline}

Section~\ref{sec:setup} establishes the notation and the definitions needed.
Section~\ref{sec:types} builds a complete direct/converse
match in the language of types.
Section~\ref{sec:lz} provides the LZ-based random coding scheme.
Section~\ref{sec:typefree} then gives a fully type-free
construction.
Section~\ref{sec:tf-converse} shows it matches
a converse exactly.
Section~\ref{sec:domination}
then uses this same type-free construction to show this paper's
achievable region is a superset of \cite{merhav244}.
Section~\ref{sec:mstage} extends the type-free construction to a
general fixed number of stages. 
Finally, Section~\ref{sec:discussion} summarizes.

\section{Formulation and Notation Conventions}
\label{sec:setup}

Let $n$ be a given positive integer and $\ell$
divide $n$. Define $m=\frac{n}{\ell}$. Let $\bz\in\mathcal A^n$
denote a generic sequence over an alphabet $\mathcal A$.
Partition $\bz$ into $m$
non-overlapping $\ell$-blocks $z_1,\dots,z_m\in\mathcal A^\ell$. Its
\emph{block type} is defined as the empirical distribution of these $m$ blocks
over $\mathcal A^\ell$, denoted $\hat P_{\bz}$, i.e.,
\begin{equation}
\hat P_{\bz}(a)\dfn\frac1m\big|\{i\in\{1,\dots,m\}:z_i=a\}\big|,~~~
a\in\mathcal A^\ell. 
\end{equation}
A \emph{block permutation} $\pi$ acts on $\bz$
by permuting its $m$ blocks as whole units; obviously, block permutations
preserve the block type, i.e.\
$\hat P_{\pi(\bz)}=\hat P_{\bz}$.
Likewise, for a fixed number $k$ of
length-$n$ sequences $\bz^{(1)},\dots,\bz^{(k)}$, over alphabets
$\mathcal A_1,\dots,\mathcal A_k$ respectively,
write $\bz^{(1)}\odot\cdots\odot\bz^{(k)}$ for their
\emph{product sequence}, namely, the length-$n$ sequence over the product
alphabet $\mathcal A_1\times\cdots\times\mathcal A_k$ whose $i$-th
symbol is $(z^{(1)}_i,\dots,z^{(k)}_i)$. Their \emph{joint (block)
type} is denoted
\begin{equation}
\hat P_{\bz^{(1)}\dots\bz^{(k)}}\dfn\hat
P_{\bz^{(1)}\odot\cdots\odot\bz^{(k)}}.
\end{equation}
Clearly, a common block permutation
applied to all $\bz^{(i)}$, $i=1,\dots k$, simultaneously, preserves
this joint type.
Denote the fixed source sequence by
$\bx$, with block type $P^\ell\dfn\hat P_{\bx}$ and finite alphabet 
$\mathcal X$ of cardinality $K=|\mathcal X|$. The reproduction alphabets of the
two reconstruction stages are denoted
by $\Xh_1$ and $\Xh_2$
with $\hat K_1=|\Xh_1|$, $\hat K_2=|\Xh_2|$; where a single, generic $\Xh$ or
$\hat K$ appears below (Lemma~\ref{lem:dc}, stated once for either
stage), it stands for whichever of $\Xh_1,\Xh_2$ (and $\hat K_1, \hat K_2$) is
being instantiated, exactly as with $d,D$ below.

Let $\bxh_1\in\Xh_1^n$ and
$\bxh_2\in\Xh_2^n$ denote the reconstruction vectors of the first stage and
second stage, respectively.
The distortion functions of the first and the second stage, $d_1$ and $d_2$,
both depend on $\bx$ and the corresponding reconstruction ($\bxh_1$ and
$\bxh_2$, respectively) only via their
first-order (single-symbol) joint empirical distribution, and, for
fixed such distribution, they grow linearly in $n$, which is naturally the
case with additive distortion functions.

Define also $\mathcal S(\bx,D_1)=\{\bxh_1:d_1(\bx,\bxh_1)\le nD_1\}$, 
$\hat{\mathcal S}(\bxh_1,D_1)=\{\bx:d_1(\bx,\bxh_1)\le nD_1\}$, and analogously
$\mathcal S(\bx,D_2),\hat{\mathcal S}(\bxh_2,D_2)$ with $d_2$ in place of $d_1$;
$\mathcal S_2(\bx,D_1,D_2)=\mathcal S(\bx,D_1)\times \mathcal S(\bx,D_2)$,
$\hat{\mathcal S}_2(\bxh_1,\bxh_2,D_1,D_2)=\{\bx:(\bxh_1,\bxh_2)\in
\mathcal S_2(\bx,D_1,D_2)\}$. For a joint type $F$, define
$\mathcal T(F|\bx)\dfn\{\bxh:~\hat P_{\bx\bxh}=F\}$, 
along with an extension in the same way to more than one conditioning sequence,
$\mathcal T(F|\bx,\bxh_1)\dfn\{\bxh_2:~\hat P_{\bx\bxh_1\bxh_2}=F\}$
and so on. Wherever a single, generic $d$ or $D$ appears below
(Lemma~\ref{lem:dc}, stated once for either stage), it stands for
whichever of $d_1,d_2$ (and $D_1,D_2$) is being instantiated.

A \emph{two-stage code} $\Phi=(\phi_1,\phi_2,\psi_1,\psi_2)$ consists
of two encoders $\phi_1,\phi_2:\mathcal X^n\to\{0,1\}^*$, with images
$\calB_1\dfn\phi_1(\mathcal X^n)$ and $\calB_2\dfn\phi_2(\mathcal
X^n)$ --- in general distinct subsets of $\{0,1\}^*$, the set of
finite variable-length binary strings --- and two decoders
$\psi_1:\calB_1\to\Xh_1^n$, $\psi_2:\calB_1\times\calB_2\to
\Xh_2^n$; the encoder output on $\bx$ is
$(\bxh_1,\bxh_2)\dfn(\psi_1(\phi_1(\bx)),\psi_2(\phi_1(\bx),\phi_2(\bx)))$,
with codeword lengths $L_1(\bx)\dfn|\phi_1(\bx)|$,
$L_2(\bx)\dfn|\phi_2(\bx)|$ and data-dependent rates
$\rho_1(\bx)\dfn L_1(\bx)/n$, $\rho_2(\bx)\dfn L_2(\bx)/n$.
The notations $R_1$ and $R_2$ are reserved for
the coordinates of the rate-pair plane.
The code $\Phi$ \emph{meets both distortion constraints}
if $d_1(\bx,\bxh_1)\le nD_1$ and $d_2(\bx,\bxh_2)\le nD_2$ for every
$\bx$; it gives a \emph{one-to-one correspondence between codewords
and binary strings at each stage} if $\psi_1$ is injective on
$\calB_1$ and $(u,v)\mapsto(\psi_1(u),\psi_2(u,v))$ is
injective on $\{(\phi_1(\bx),\phi_2(\bx)):\bx\in\mathcal X^n\}$ ---
distinct reconstructions get distinct binary representations, at each stage and
jointly, with neither prefix-freedom nor unique decodability
required (all Theorem~\ref{thm:conv}'s own counting argument needs).
Both conditions are assumed throughout.

For a type $Q^\ell$ on $\Xh^\ell$, $\mathcal T_n(Q^\ell)\dfn\{\bxh\in\Xh^n:
~\hat P_{\bxh}=Q^\ell\}\subset\Xh^n$ is its
type class. Let $U_{Q^\ell}(\cdot)$ denote the uniform distribution on it,
\begin{equation}
U_{Q^\ell}(\bxh)=\left\{\begin{array}{ll}
\frac{1}{|\mathcal T_n(Q^\ell)|} & \bxh\in\mathcal T_n(Q^\ell)\\
0 & \mbox{elsewhere,}\end{array}\right.
\end{equation}

\emph{Notation of marginals and conditionals of $W^\ell$.} 
Let $X$, $\hat X_1$ and $\hat X_2$ denote auxiliary random variables, taking
values on $\calX^\ell$, $\hat{\calX}_1^\ell$, and $\hat{\calX}_2^\ell$,
respectively, jointly distributed according to a given joint type
of $\ell$-vectors $\hat{P}_{\bx\bxh_1\bxh_2}=W$.
For any
subset of $\{X,\hat X_1,\hat X_2\}$, possibly with a further subset
as conditioning, $W^\ell$ subscripted by that subset --- with a bar
for conditioning, following the customary notation rules of probability
theory, denotes the corresponding marginal and/or conditional type induced by $W^\ell$,
e.g.\ $W_X^\ell$ is $W^\ell$'s own $X$-marginal, $W_{X\hat X_1}^\ell$
its own $(X,\hat X_1)$ sub-type, $W_{\hat X_2|\hat X_1}^\ell$ its
own conditional type of $\hat X_2$ given $\hat X_1$, etc.
For the sake of notational simplicity, the few such
objects used repeatedly throughout also carry a short
alias, defined here once and used interchangeably with the
subscripted form thereafter:
\begin{equation}
P^\ell\dfn W_X^\ell,\qquad Q_1^\ell\dfn W_{\hat X_1}^\ell,\qquad
Q_2^\ell\dfn W_{\hat X_2}^\ell,\qquad
W_1^\ell\dfn W_{X,\hat X_1}^\ell,\qquad W_2^\ell\dfn W_{X,\hat X_2}^\ell.
\end{equation}
Objects used only rarely
are left in the fully
subscripted form.

For the type $Q_1^\ell$, treated as a probability distribution on its
own alphabet, $\hat H_{Q_1^\ell}(\hat X_1)\dfn-\sum_a Q_1^\ell(a)\log Q_1^\ell(a)$ is
its (empirical) entropy; for a joint type $W_1^\ell$, $\hat
H_{W_1^\ell}(\hat X_1|X)$, $\hat I_{W_1^\ell}(X;\hat X_1)=
\hat H_{Q_1^\ell}(\hat X_1)-\hat
H_{W_1^\ell}(\hat X_1|X)$
denote the corresponding conditional entropy and mutual
information, and so on.
We will also use alternative notations like $\hat H(Q_1^\ell)$ and 
$\hat H(Q_2^\ell)$ for 
$\hat H_{Q_1^\ell}(\hat X_1)$
and $\hat H_{Q_2^\ell}(\hat X_2)$ whenever
convenient and safe from any risk of ambiguity.

\subsection{A general random-coding tool}
\label{sec:universal-ach}

One achievability mechanism is used throughout this
paper: search a large randomly and independently selected codebook, drawn from some fixed
random coding distribution, for the first candidate meeting a certain target set --- and the
resulting codelength is governed entirely by the target's 
probability under that measure. Both pieces vary from one use to the
next, but the underlying principle remains the same.
The following lemma is proved in Appendix~\ref{app:proofs} very similarly to the proof of
Theorem~2 of \cite{merhav233}.

\begin{lemma}[Universal random-coding achievability]
\label{lem:universal-ach}
Let $\mu$ be a probability distribution on $\Xh^n$ with
$\mu(\bxh)\ge B^{-n}$ for every $\bxh$ in its support, for some
$B\ge1$ (possibly depending on $n$). For every $A>B$ and $\epsilon>0$
there is $N=N(\epsilon,A,B)$ such that, for every $n>N$, there is a
codebook of $A^n$ sequences, drawn independently at random from
$\mu$, such that for every $\bx$ and every target
$\mathcal E(\bx)\subset\Xh^n$, the first codeword landing in
$\mathcal E(\bx)$ gives codelength
\begin{equation}
L(\bx)\le-\log\mu[\mathcal E(\bx)]+(2+\epsilon)\log n+c,
\end{equation}
$c>0$ depends only on $A$. The restriction to $n>N(\epsilon)$ cannot
be dropped: $N(\epsilon)\to\infty$ as $\epsilon\to0^+$, and the
argument fails outright at $\epsilon=0$ --- see the proof.
\end{lemma}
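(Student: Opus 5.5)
The plan is the standard random-coding covering argument behind Theorem~2 of \cite{merhav233}: bound the index of the first hit in $\mathcal E(\bx)$, then charge the cost of describing that index. Draw $\bxh(1),\dots,\bxh(A^n)$ i.i.d.\ from $\mu$. For a given $\bx$ and target $\mathcal E=\mathcal E(\bx)$ with $\mu[\mathcal E]>0$, let $J$ be the index of the first codeword in $\mathcal E$. The encoder sends $J$ with a universal integer code (e.g.\ Elias' $\delta$-code), of length $\log J+O(\log\log J)$, plus a constant-size flag for the case where no codeword hits. Since $J\le A^n$, the correction term is $O(\log n)$ with constant depending only on $A$. This yields $(1+o(1))\log n$ overhead plus $c$.

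The key estimate is a geometric tail. For $M\le A^n$,
\begin{equation}
\Pr\{J>M\}=(1-\mu[\mathcal E])^{M}\le\exp\{-M\mu[\mathcal E]\}.
\end{equation}
Choose $M=n^{1+\epsilon}/\mu[\mathcal E]$. The failure probability for this pair is then at most $\exp(-n^{1+\epsilon})$, and on success
\begin{equation}
\log J\le-\log\mu[\mathcal E]+(1+\epsilon)\log n .
\end{equation}
Combined with the integer-code overhead, this gives at most $-\log\mu[\mathcal E]+(2+\epsilon)\log n+c$.

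The hypothesis $\mu(\bxh)\ge B^{-n}$ on the support serves to keep the required $M$ inside the codebook. Whenever $\mathcal E$ meets the support of $\mu$, we have $\mu[\mathcal E]\ge B^{-n}$. Hence $M\le n^{1+\epsilon}B^n$, and this is below $A^n$ as soon as $(A/B)^n>n^{1+\epsilon}$. This is the origin of $N(\epsilon,A,B)$: it grows without bound as $\epsilon\to0^+$ (and as $A\downarrow B$). At $\epsilon=0$ the slack $n^{\epsilon}$ in the exponent is gone, so the tail is only $e^{-n}$. That tail no longer beats the union bound below, so the argument fails outright, exactly as the statement warns.

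The main obstacle is the quantifier ``for every $\bx$ and every target''. Arbitrary subsets $\mathcal E\subset\Xh^n$ cannot all be covered simultaneously, since the complement of the first $M$ codewords always has large mass. So I would read the lemma, as it is used later in the paper, as applying to targets drawn from a prescribed family $\{\mathcal E(\bx)\}$ whose total cardinality over all $\bx$ is at most $2^{O(n\log n)}$. Examples are the distortion balls $\mathcal S(\bx,D)$, the conditional sets $\mathcal T(F|\bx)$ over the polynomially-many-in-$m$ joint types, and the conditional targets given $\bxh_1$, where $\bxh_1$ ranges over at most $\hat K_1^n$ values. A union bound then makes the total failure probability at most $2^{O(n\log n)}e^{-n^{1+\epsilon}}\to0$, so for $n>N$ some codebook realization works for all pairs at once.

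For the conditional (second-stage) uses, the same argument applies verbatim for each fixed $\bxh_1$, with $\mu$ replaced by $V(\cdot|\bxh_1)$. The union bound over $\bxh_1$ costs only a further factor $\hat K_1^n$, which is again absorbed by the $e^{-n^{1+\epsilon}}$ tail.
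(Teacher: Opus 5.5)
Your proof is essentially the paper's argument. It draws $A^n$ i.i.d.\ codewords from $\mu$ and uses the geometric tail $\Pr\{J>M\}\le e^{-M\mu[\mathcal E]}$ at $M=n^{1+\epsilon}/\mu[\mathcal E]$. A union bound then forces the exponent $n^{1+\epsilon}$ rather than $n^{\epsilon}$, and an index code costs $\log J$ plus about $\log n$. Your single event $\{J\le n^{1+\epsilon}/\mu[\mathcal E]\}$ excludes search failure and excess length at once, using $\mu\ge B^{-n}$ only to keep the threshold inside the codebook. That is cleaner than the paper's bound on an expected maximum with separate failure and excess terms, but the mechanism is the same.

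You are right that the literal ``every target'' quantifier cannot hold for one codebook. The paper's proof also proves only a restricted version: its union bound runs over $\bx\in\mathcal X^n$ alone, i.e.\ over one fixed target map $\bx\mapsto\mathcal E(\bx)$, so the good codebook depends on that map. Your prescribed family of $2^{O(n\log n)}$ targets is a mild generalization; for instance, it lets one codebook serve all thresholds $\tau$. With $\ell=\Theta(\log n)$ the number of block types is $2^{O(n^{3/4}\log n)}$ rather than polynomial, but that is still inside your budget.

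Two details need repair.

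\emph{The index code.} Elias' $\delta$-code does not give the overhead you claim. Its length is $\lfloor\log J\rfloor+2\lfloor\log(\lfloor\log J\rfloor+1)\rfloor+1$, and for targets of exponentially small measure $\log J$ is of order $n$. The overhead is then about $2\log n$, so you would end with $(3+\epsilon)\log n$ instead of $(2+\epsilon)\log n$. An $O(\log n)$ correction does not by itself give $(1+o(1))\log n$. You need a code that exploits the a priori bound $J\le A^n$. The paper uses lengths $-\log u[J]$ with $u[i]=(1/i)/\sum_{k=1}^{A^n}(1/k)$, which gives $L\le\log J+\log(n\ln A+1)\le\log J+\log n+\log(\ln A+1)$. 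Equivalently, send $\lfloor\log J\rfloor$ in a fixed $\lceil\log(n\log A+1)\rceil$ bits, followed by the binary digits of $J$.

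\emph{The source of $N(\epsilon)\to\infty$.} The condition $(A/B)^n>n^{1+\epsilon}$ is not what drives $N(\epsilon)\to\infty$ as $\epsilon\to0^+$; it becomes easier as $\epsilon$ decreases. The blow-up comes from the union bound, where $e^{-n^{1+\epsilon}}$ must beat $2^{O(n\log n)}$ (in the paper, $|\mathcal X|^n$). That requires $n^{\epsilon}$ to dominate $O(\log n)$ (respectively $\ln|\mathcal X|$), which is what forces $N(\epsilon)\to\infty$.
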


Intuitively, each candidate lands in $\mathcal E(\bx)$ independently
with probability $\mu[\mathcal E(\bx)]$, so the first success typically
takes about $1/\mu[\mathcal E(\bx)]$ draws, and encoding this number
costs roughly about $\log\big(1/\mu[\mathcal E(\bx)]\big) =
-\log\mu[\mathcal E(\bx)]$ bits, which is the bound's leading term.

\section{The type-covering skeleton}
\label{sec:types}

Even in Rimoldi's own, classical (probabilistic, non-universal)
setting --- with $X,\hat X_1,\hat X_2$ single-symbol random
variables whose joint law is the classical auxiliary distribution,
and the actual scheme operating on the length-$n$ sequences
$X^n,\hat X_1^n,\hat X_2^n$, each drawn i.i.d.\ according to it ---
the two-stage rate region is not achieved by letting
Stage~1 choose $\hat X_1^n$ to merely satisfy the first distortion
constraint, then designing an optimal Stage-2 code for whatever
$\hat X_1^n$ results. The standard covering-lemma proof of Stage~1's
own achievability draws candidate sequences i.i.d., each coordinate
from the auxiliary's own
marginal law, and searches for the first one \emph{jointly typical}
with $X^n$ under the specific joint law --- not merely distortion-compliant.
Bounding Stage~2's rate, given only that $\hat X_1^n$ meets $D_1$,
requires an expectation of the form $E\{-\log(\mbox{Stage-2 success
probability}\mid\hat X_1^n)\}$ over that uncontrolled, random $\hat
X_1^n$; since $-\log(\cdot)$ is convex, Jensen's inequality bounds this
only from below, the wrong direction for achievability. The gap is
between two same-size Stage-2 codebooks: one with codewords drawn
fully independently from the compound law, achieving Jensen's own
lower bound; the real scheme's own, whose codewords form a single
\emph{cloud} around whichever $\hat X_1^n$ results, conditionally
--- not marginally --- independent, achieving the true, larger rate
instead. This is
exactly why Rimoldi's characterization is stated as a union over
auxiliary \emph{joint} distributions of $(X,\hat X_1,\hat X_2)$, not
over distortion-feasible marginals: fixing $\hat X_1$'s relationship
to $X$ at the single-letter level pins down $\hat X_1^n$'s
relationship to $X^n$ throughout, removing the expectation, hence
the Jensen gap, entirely. This
paper's own joint-type matching, throughout what follows, is the
individual-sequence transplant of that same cure.

\subsection{A single double-counting lemma, instantiated twice}

The following lemma and its proof already appear in \cite[Theorem 1]{merhav233}.
We restate it here, in the
single-reconstruction case first, for the sake of completeness, and then
note separately how it extends to the 
two-stage reconstruction case.

\begin{lemma}[Double counting] \label{lem:dc}
Given two types $P^\ell$ (on $\mathcal X^\ell$) and $Q^\ell$ (on
$\Xh^\ell$) the following holds. For any two representatives, $\bx\in \mathcal T_n(P^\ell)$ and 
$\bxh\in \mathcal T_n(Q^\ell)$,
\begin{equation}
\frac{|\mathcal T_n(Q^\ell)\cap\mathcal S(\bx,D)|}{|\mathcal T_n(Q^\ell)|}
= \frac{|\mathcal T_n(P^\ell)\cap\hat{\mathcal S}(\bxh,D)|}{|\mathcal T_n(P^\ell)|},
\end{equation}
and both sides are independent of the particular representatives chosen.
\end{lemma}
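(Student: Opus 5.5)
Both sides of the identity count the same finite set, normalized in two different ways. The plan is to double count the set of pairs
\begin{equation}
\calS^*\dfn\{(\bx',\bxh')\in\mathcal T_n(P^\ell)\times\mathcal T_n(Q^\ell):~d(\bx',\bxh')\le nD\},
\end{equation}
once by summing over $\bx'$ and once by summing over $\bxh'$. In each sum, every fiber turns out to have the same size.

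\textbf{Step 1 (invariance).} Since $d$ depends on $(\bx,\bxh)$ only through the first-order joint empirical distribution, $d$ is invariant under any common block permutation $\pi$: $d(\pi(\bx),\pi(\bxh))=d(\bx,\bxh)$. Indeed, $\pi$ merely reorders the symbol pairs, and a block permutation is a special case of a symbol permutation. Block permutations also preserve block types. Hence $\bxh\mapsto\pi(\bxh)$ is a bijection from $\mathcal T_n(Q^\ell)\cap\mathcal S(\bx,D)$ onto $\mathcal T_n(Q^\ell)\cap\mathcal S(\pi(\bx),D)$.

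\textbf{Step 2 (transitivity).} Any two sequences with the same block type are related by a block permutation, because their multisets of $m$ blocks coincide. So for $\bx,\bx'\in\mathcal T_n(P^\ell)$ there is a $\pi$ with $\pi(\bx)=\bx'$, and Step~1 gives
\begin{equation}
|\mathcal T_n(Q^\ell)\cap\mathcal S(\bx,D)|=|\mathcal T_n(Q^\ell)\cap\mathcal S(\bx',D)|.
\end{equation}
Thus the left side of the lemma does not depend on the representative $\bx$. The symmetric argument, now permuting $\bx$ for a fixed $\bxh$, shows that $|\mathcal T_n(P^\ell)\cap\hat{\calS}(\bxh,D)|$ does not depend on $\bxh\in\mathcal T_n(Q^\ell)$. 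This settles the independence claim.

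\textbf{Step 3 (double counting).} Summing over the first coordinate gives
\begin{equation}
|\calS^*|=|\mathcal T_n(P^\ell)|\cdot|\mathcal T_n(Q^\ell)\cap\mathcal S(\bx,D)|.
\end{equation}
Summing over the second coordinate gives
\begin{equation}
|\calS^*|=|\mathcal T_n(Q^\ell)|\cdot|\mathcal T_n(P^\ell)\cap\hat{\calS}(\bxh,D)|.
\end{equation}
Equating the two expressions and dividing by $|\mathcal T_n(P^\ell)|\cdot|\mathcal T_n(Q^\ell)|$ yields the stated identity.

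The only point needing care is Step~1: the invariance of $d$ must be justified under block permutations, not merely symbol permutations. This follows directly from the assumption that $d$ depends only on the single-symbol joint empirical distribution. The remaining steps are routine bookkeeping.
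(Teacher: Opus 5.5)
Your proof is correct and follows essentially the same route as the paper's: double count the pairs $(\bx',\bxh')\in\mathcal T_n(P^\ell)\times\mathcal T_n(Q^\ell)$ within distortion $nD$, show every fiber has the same size, then equate the two counts and divide. The equal fiber sizes come from invariance of $d$ and of block types under a common block permutation, plus the fact that block permutations act transitively on a type class. Your Step~1 states the bijection a little more precisely than the paper's sketch does: it maps the intersection for $\bx$ onto the intersection for $\pi(\bx)$, rather than ``onto itself.''
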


The significance of this lemma is as follows. The left-hand side is readily
interpreted as $U_{Q^{\ell}}[\calS(\bx,D)]$, which is the probability that a
randomly selected codeword under $U_{Q^\ell}$ would fall within distance $nD$
from $\bx$. The right-hand side is the reciprocal of the 
sphere-covering ratio in the input space. While the former is intimately related to random-coding
achievability, the latter is a central building block of the converse.
Intuitively, Lemma \ref{lem:dc} can be considered as a
combinatorial analogue (considering the method of types) to the simple fact that the rate-distortion function can
be either represented as the minimum of $H(X)-H(X|\hat{X})=I(X;\hat{X})$ or as the minimum
of $H(\hat{X})-H(\hat{X}|X)=I(X;\hat{X})$, both taken subject to the distortion constraint.

\begin{proof}[Proof sketch]
Let $N(D)\dfn\sum_{\bx,\bxh}\bone\{\bx\in \mathcal T_n(P^\ell),~\bxh\in
\mathcal T_n(Q^\ell),~d(\bx,\bxh)\le nD\}$. Counting it by grouping over
$\bx$ first, or over $\bxh$ first, gives
$N(D)=|\mathcal T_n(P^\ell)|\cdot|\mathcal T_n(Q^\ell)\cap \mathcal S(\bx,D)|$ (any
representative $\bx$) and $N(D)=|\mathcal T_n(Q^\ell)|\cdot|\mathcal 
T_n(P^\ell)\cap\hat{\mathcal S}(\bxh,D)|$ (any representative $\bxh$): each equality uses that
distortion depends only on the first-order joint empirical
distribution (Section~\ref{sec:setup}), so any blockwise permutation
carrying one representative to another carries the corresponding
intersection onto itself bijectively, leaving its size unchanged ---
hence $|\mathcal T_n(Q^\ell)\cap \mathcal S(\bx,D)|$ does not depend on which $\bx$ is
chosen, nor $|\mathcal T_n(P^\ell)\cap\hat{\mathcal S}(\bxh,D)|$ on which $\bxh$ is
chosen. Equating the two expressions for $N(D)$ and dividing by
$|\mathcal T_n(P^\ell)|\cdot|\mathcal T_n(Q^\ell)|$ gives the claim. 
\end{proof}

\emph{Extension to two reconstructions, and to type-match targets.}
This extends Lemma~\ref{lem:dc} to the pair $(\bxh_1,\bxh_2)$,
treated as a single reconstruction over the
product alphabet $\Xh_1\times\Xh_2$, exactly as a single $\bxh$ was
treated over $\Xh$ above. The lemma applies verbatim:
$\mathcal S_2(\bx,D_1,D_2)$, and likewise $\mathcal T(W^\ell|\bx)\dfn\{\bxh:\hat P_{\bx,\bxh}=W^\ell\}$
--- now with $\bxh=(\bxh_1,\bxh_2)$ ranging over the product alphabet
--- are each
still invariant under a common blockwise permutation of $\bx$ and the
reconstruction, for exactly the same reason $\mathcal S(\bx,D)$ is
(Section~\ref{sec:setup}) --- two distortion constraints, or a full
type match, are no different from one in this respect. So nothing
changes beyond relabeling: $U_{W^\ell}[\mathcal S_2(\bx,D_1,D_2)]=|\mathcal T_n(P^\ell)
\cap\hat S_2|/|\mathcal T_n(P^\ell)|$ and $U_{Q^\ell}[\mathcal
T(W^\ell|\bx)]=|\mathcal T_n(P^\ell)\cap\mathcal T(W^\ell|\bxh)|/|\mathcal T_n(P^\ell)|$
--- the form the rest of this paper uses throughout, for both one
and two reconstructions. This concludes the extension.

The single-reconstruction case --- target $\mathcal T(W_1^\ell|\bx)$
over the alphabet $\Xh_1^\ell$ --- is what Section~\ref{sec:types}
uses throughout; the two-stage case --- target $\mathcal T(W^\ell|\bx)$
(the full-triple type match) over the doubled alphabet
$\Xh_1^\ell\times\Xh_2^\ell$ --- is what the proofs below use. Both
are used without re-deriving the lemma.

\subsection{Converse}

Lemma~\ref{lem:dc} gives the bound this subsection is built from: a
rate lower bound for any code, holding for the vast majority of
source sequences of a given type, but conditional on the code
actually realizing that type on the sequence in question.
Theorem~\ref{thm:conv} below states this bound formally, its proof
deferred to Appendix~\ref{app:thm1}; it serves
as a preparatory, supporting step toward the actual converse,
Theorem~\ref{thm:conv-region}, established next.

\begin{theorem}[Type-conditional rate bound] \label{thm:conv}
Let $\Phi=(\phi_1,\phi_2,\psi_1,\psi_2)$ be any two-stage code that
satisfies the two distortion constraints. Fix any joint type
$W^\ell$ with $X$-marginal $W_X=P^\ell$ and let $\epsilon$ be an
arbitrarily small positive real. For at
least a $(1-2n^{-\epsilon})$ fraction of the source sequences $\{\bx\}$ in $\mathcal
T_n(P^\ell)$ the following holds: If
$(\bx,\psi_1(\phi_1(\bx)),\psi_2(\phi_1(\bx),\phi_2(\bx)))$ falls in
joint type $W^\ell$, then
\begin{equation}
L_1(\bx) \ge nR_1^*(W^\ell) - \epsilon\log n, \qquad
L_1(\bx)+L_2(\bx) \ge nR^*(W^\ell) - \epsilon\log n,
\end{equation}
where
\begin{equation}
R_1^*(W^\ell)\dfn-\frac1n\log\big(U_{Q_1^\ell}[\mathcal T(W_1^\ell|\bx)]\big),
\qquad
R^*(W^\ell)\dfn-\frac1n\log\big(U_{W^\ell}[\mathcal T(W^\ell|\bx)]\big).
\end{equation}
\end{theorem}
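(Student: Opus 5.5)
The plan is a counting argument on the type class $\mathcal T_n(P^\ell)$. For each of the two inequalities I define a ``bad'' set of source sequences, bound its size by the number of short binary descriptions times the number of sources each description can serve, and convert that product into a fraction of $|\mathcal T_n(P^\ell)|$ using Lemma~\ref{lem:dc}. A union bound over the two bad sets then gives the factor $2n^{-\epsilon}$.

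\emph{Stage-1 bound.} Let $\calB^{(1)}$ be the set of $\bx\in\mathcal T_n(P^\ell)$ such that the triple $(\bx,\bxh_1,\bxh_2)$ has joint type $W^\ell$ and $L_1(\bx)<nR_1^*(W^\ell)-\epsilon\log n$.
\begin{itemize}
\item Fewer than $2^{nR_1^*-\epsilon\log n}$ binary strings are that short.
\item Since $\psi_1$ is injective on $\calB_1$, these strings produce fewer than that many distinct $\bxh_1$. Each such $\bxh_1$ lies in $\mathcal T_n(Q_1^\ell)$.
\item Every $\bx\in\calB^{(1)}$ that is mapped to a given $\bxh_1$ satisfies $\bx\in\mathcal T_n(P^\ell)\cap\mathcal T(W_1^\ell|\bxh_1)$.
\item By the single-reconstruction instance of Lemma~\ref{lem:dc} (type-match target), this intersection has size $|\mathcal T_n(P^\ell)|\cdot U_{Q_1^\ell}[\mathcal T(W_1^\ell|\bx)]=|\mathcal T_n(P^\ell)|\,2^{-nR_1^*(W^\ell)}$, whatever representative $\bxh_1$ is used.
\end{itemize}
Multiplying the two counts gives $|\calB^{(1)}|\le n^{-\epsilon}|\mathcal T_n(P^\ell)|$. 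A side remark: $R_1^*$ does not depend on which $\bx\in\mathcal T_n(P^\ell)$ appears in its definition, again by Lemma~\ref{lem:dc}.

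\emph{Sum-rate bound.} Let $\calB^{(2)}$ be the set of $\bx\in\mathcal T_n(P^\ell)$ whose triple has type $W^\ell$ and for which $L_1+L_2<nR^*-\epsilon\log n$.
\begin{itemize}
\item The map $(u,v)\mapsto(\psi_1(u),\psi_2(u,v))$ is jointly injective, so the number of distinct pairs $(\bxh_1,\bxh_2)$ is at most the number of string pairs $(u,v)$ with $|u|+|v|$ below the threshold.
\item Each such pair, viewed as one reconstruction over $\Xh_1\times\Xh_2$, serves at most $|\mathcal T_n(P^\ell)\cap\mathcal T(W^\ell|\bxh_1,\bxh_2)|=|\mathcal T_n(P^\ell)|\,2^{-nR^*(W^\ell)}$ sources. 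This uses the two-reconstruction extension of Lemma~\ref{lem:dc}.
\end{itemize}
Adding $|\calB^{(1)}|$ and $|\calB^{(2)}|$ then gives the complement fraction $1-2n^{-\epsilon}$, as claimed.

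\emph{Main obstacle: counting string pairs.} The number of pairs $(u,v)$ with $|u|+|v|<L$ is $\sum_{t<L}(t+1)2^t\approx L\,2^L$, not $2^L$. Since $L=O(n)$, this costs an extra $\log n$, which the $\epsilon\log n$ slack cannot absorb. I would try the following remedies, in order.
\begin{enumerate}
\item Stratify by the value of $L_1$. For each fixed $L_1=k$, the number of admissible pairs is about $2^{L}$. I would then check whether fixing $k$ can be exploited, for instance via the already-proved Stage-1 bound, which pins $k\ge nR_1^*-\epsilon\log n$. That leaves only $O(n)$ values of $k$, so this alone still costs a factor $n$.
\item If stratification fails, the cleanest fix is to observe that the polynomial factor only shifts constants: replace $\epsilon\log n$ by $(1+\epsilon)\log n$ in the sum-rate inequality, or equivalently run the argument with $\epsilon$ replaced by $\epsilon+1$ in the exceptional fraction.
\item Alternatively, note that a pair $(u,v)$ can be encoded as a single self-delimiting string at a cost of $O(\log n)$ bits, so that injectivity reduces to the single-string case.
\end{enumerate}
I expect this bookkeeping to be the only delicate point. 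The Stage-1 bound and the double-counting identities are routine given Lemma~\ref{lem:dc}.
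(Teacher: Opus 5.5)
Your Stage-1 count is the paper's own argument (items (i)--(iv) of Appendix~\ref{app:thm1}), and it is fine apart from one small slip. What bounds the number of distinct reconstructions by the number of short strings is that $\psi_1$ (resp.\ $(u,v)\mapsto(\psi_1(u),\psi_2(u,v))$) is a \emph{function}. Injectivity points the other way and plays no role in the count.

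The genuine gap is the sum-rate inequality, and none of your remedies proves it as stated. Remedy~1 leaves $\Theta(n)$ strata, as you observe. Remedies~2 and~3 prove a weaker statement: slack of order $(1+\epsilon)\log n$ instead of $\epsilon\log n$, or equivalently an exceptional fraction of order $n^{1-\epsilon}$. That weaker form is enough for Theorem~\ref{thm:conv-region}, which tolerates $O(\log n)$ slack anyway, but it is not the theorem.

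The paper does not close this either. It asserts that $(\phi_1(\bx),\phi_2(\bx))$ ``jointly determine one binary string of this length from which $(\bxh_1,\bxh_2)$ is recovered'' and reuses the $2^{\ell_0}$ count. That is exactly the step you distrusted. It holds only if the split point of the concatenation $uv$ is recoverable, e.g.\ if $\calB_1$ is prefix-free, and Section~\ref{sec:setup} explicitly does not assume this.

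The missing ingredient is Kraft's inequality for the first stage. Suppose $\sum_{u\in\calB_1}2^{-|u|}\le1$, i.e.\ Stage~1 is prefix-free or uniquely decodable. This costs the paper nothing, since the index codes of Lemma~\ref{lem:universal-ach} are prefix-free. Then your stratification closes with no loss: the pairs with $|u|+|v|<L$ number at most $\sum_{u\in\calB_1}(2^{L-|u|}-1)<2^L$. Hence $|\calB^{(2)}|<n^{-\epsilon}|\mathcal T_n(P^\ell)|$, exactly as in Stage~1.

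Without such a hypothesis the $\log n$ loss is real, and for $\epsilon<1$ the stated sum-rate bound fails for generic $W^\ell$. Here is a sketch of a counterexample.
\begin{itemize}
\item Let $L\dfn nR^*(W^\ell)-\epsilon\log n$ and $k_0\dfn nR_1^*(W^\ell)-\epsilon\log n$.
\item For each integer $k\in(k_0,k_0+Cn^{\epsilon}]$, give every $u$ of length $k$ its own (distinct) $\bxh_1(u)$, drawn uniformly from $\mathcal T_n(Q_1^\ell)$.
\item Give every $v$ with $|v|<L-k$ a refinement drawn uniformly from the conditional type class of $\hat X_2$ given $\bxh_1(u)$ under $W^\ell$.
\item A fixed $\bx\in\mathcal T_n(P^\ell)$ is completed to type $W^\ell$ under a given such $u$ with probability at least $(1-e^{-1})2^{-nR_1^*(W^\ell)}2^{k_0-k-1}$.
\item Summing over the $2^k$ strings of each length and the roughly $Cn^{\epsilon}$ lengths, $\bx$ is covered with probability at least $1-e^{-C/4}$.
\item So some realization gives most of $\mathcal T_n(P^\ell)$ a type-$W^\ell$ output with $L_1+L_2<L$. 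Since $L_1>k_0$, the Stage-1 bound survives. All remaining sequences go to an admissible fallback code with long strings.
\end{itemize}
To finish, you must either add prefix-freedom (or Kraft) for $\calB_1$ to the hypotheses, or settle for your weakened slack. The statement as written cannot be derived from the assumptions the paper states.
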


The single-description bound is the special case of the marginal
bound with the second stage absent.

\emph{Why this is not yet a converse.} Theorem~\ref{thm:conv}'s
guarantee is conditional on the event described in the ``if'' clause
above --- but $\Phi$ and $W^\ell$ are chosen independently of each
other, and nothing forces $\Phi$ to realize that particular type on
more than a negligible subset of $\mathcal T_n(P^\ell)$. Concretely:
for a fixed $\Phi$, let
$G_\Phi(W^\ell)\dfn\{\bx\in\mathcal T_n(P^\ell):
(\bx,\psi_1(\phi_1(\bx)),\psi_2(\phi_1(\bx),\phi_2(\bx)))\mbox{ has
joint type }W^\ell\}$. Theorem~\ref{thm:conv} only
bounds how much of $G_\Phi(W^\ell)$ can violate the rate bound, as a
fraction of $|\mathcal T_n(P^\ell)|$ --- not as a fraction of
$|G_\Phi(W^\ell)|$ itself. A code that realizes type $W^\ell$ only on
a small, handpicked subset of $\mathcal T_n(P^\ell)$ (spreading its
remaining output thinly across other types) can make $G_\Phi(W^\ell)$
itself smaller than Theorem~\ref{thm:conv}'s own exceptional set, and
so can violate the rate bound on \emph{all} of $G_\Phi(W^\ell)$
without contradicting the theorem as stated. Turning
Theorem~\ref{thm:conv} into an actual converse takes tying the
exceptional-set size to the number of types available, not to a
single, pre-chosen $W^\ell$ in isolation --- done in
Theorem~\ref{thm:conv-region} next, by union-bounding over every type in $\mathcal W$
simultaneously.

Let $\mathcal W$ be the family of types with $X$-marginal $P^\ell$
meeting both distortion budgets, with $\mathcal T_n(W^\ell)\ne\emptyset$;
since $\ell=\ell(n)=\Theta(\log n)$ is fixed at each $n$ and the
alphabets are fixed, $|\mathcal W|\le n^c$ for a constant
$c=c(|\mathcal X|,\hat K_1,\hat K_2)$. Write
\begin{equation}
\mathcal R(\bx) \ \dfn\ \bigcup_{W^\ell\in\mathcal W} \big\{(R_1,R_2): R_1\ge R_1^*(W^\ell),\ R_1+R_2\ge R^*(W^\ell)\big\}
\end{equation}
for the union of every type's own quadrant ---
Section~\ref{sec:ach-region} below shows this is exactly the
achievable region, once Theorem~\ref{thm:ach}'s own achievability is
in place.

\begin{theorem}[Type converse] \label{thm:conv-region}
For every type $P^\ell$ (on $\mathcal X^\ell$), every two-stage code
$\Phi$ that satisfies both distortion
constraints, and every $\epsilon>0$, there is at most a $2n^{-\epsilon}$
fraction of $\bx\in \mathcal T_n(P^\ell)$ for which $\Phi$'s own rate
pair $\big(\rho_1(\bx),\rho_1(\bx)+\rho_2(\bx)\big)$ falls outside
$\mathcal R(\bx)$, up to $O(\log n/n)$ slack in each coordinate.
\end{theorem}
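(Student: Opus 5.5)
The plan is to apply Theorem~\ref{thm:conv} to every $W^\ell\in\mathcal W$ at a slightly inflated parameter $\epsilon'=\epsilon+c$ and take a union bound over the polynomially many types. Here $c$ is the constant in $|\mathcal W|\le n^c$. For each $W^\ell\in\mathcal W$, Theorem~\ref{thm:conv} supplies an exceptional set $E(W^\ell)\subset\mathcal T_n(P^\ell)$ of relative size at most $2n^{-\epsilon'}$. Off this set, realizing joint type $W^\ell$ forces
\[
L_1(\bx)\ge nR_1^*(W^\ell)-\epsilon'\log n,\qquad L_1(\bx)+L_2(\bx)\ge nR^*(W^\ell)-\epsilon'\log n .
\]
Set $E\dfn\bigcup_{W^\ell\in\mathcal W}E(W^\ell)$. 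Then $|E|/|\mathcal T_n(P^\ell)|\le |\mathcal W|\cdot 2n^{-\epsilon'}\le 2n^{c-\epsilon'}=2n^{-\epsilon}$, which is the claimed exceptional fraction.

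Next, take any $\bx\in\mathcal T_n(P^\ell)\setminus E$, and let $W^\ell_{\bx}$ be the actual joint type of $(\bx,\bxh_1,\bxh_2)$ produced by $\Phi$. This type automatically belongs to $\mathcal W$ for three reasons. Its $X$-marginal is $\hat P_{\bx}=P^\ell$. Its type class is nonempty, since it is realized. And because $\Phi$ meets both distortion constraints for every $\bx$, while $d_1,d_2$ depend only on the first-order joint empirical distribution, which is a function of the block joint type, $W^\ell_{\bx}$ meets both budgets. Since $\bx\notin E(W^\ell_{\bx})$ and the ``if'' clause of Theorem~\ref{thm:conv} holds for $W^\ell_{\bx}$, dividing by $n$ gives
\[
\rho_1(\bx)\ge R_1^*(W^\ell_{\bx})-\epsilon'\frac{\log n}{n},\qquad \rho_1(\bx)+\rho_2(\bx)\ge R^*(W^\ell_{\bx})-\epsilon'\frac{\log n}{n}.
\]
So after shifting each coordinate by $O(\log n/n)$, the rate pair lies in the quadrant of $W^\ell_{\bx}$, and hence in $\mathcal R(\bx)$.

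The main subtlety, and the point this theorem exists to fix, is already handled by the quantifier order. We do not need $G_\Phi(W^\ell)$ to be large for any particular $W^\ell$. Every $\bx$ lies in exactly one $G_\Phi(W^\ell_{\bx})$, and the exceptional sets are controlled simultaneously for all types, so a code cannot hide its violations in a small handpicked type class. The remaining checks are minor. First, Theorem~\ref{thm:conv} must hold uniformly in $\epsilon$, in particular for the inflated $\epsilon'$, and its exceptional sets must depend only on $(\Phi,W^\ell,\epsilon)$, which its statement permits. Second, $R_1^*$ and $R^*$ must be independent of the representative $\bx$, which follows from Lemma~\ref{lem:dc} and its two-reconstruction extension, so $\mathcal R(\bx)$ depends only on $P^\ell$. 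Third, the polynomial bound $|\mathcal W|\le n^c$ needs $\ell=\Theta(\log n)$ with a small enough constant, and I would take that standing assumption as given. The slack $\epsilon'\log n/n=(\epsilon+c)\log n/n$ is $O(\log n/n)$, as required.
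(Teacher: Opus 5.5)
Your proposal is correct and is essentially the paper's own proof: apply Theorem~\ref{thm:conv} at every $W^\ell\in\mathcal W$ with $\epsilon$ inflated to $\epsilon+c$, union-bound the exceptional sets using $|\mathcal W|\le n^c$, and then invoke the surviving bounds at whichever type $W^\ell_{\bx}\in\mathcal W$ the code actually realizes on $\bx$. One caveat, which you share with the paper: $|\mathcal W|\le n^c$ with $c$ independent of $n$ holds only for bounded $\ell$, since for $\ell=\Theta(\log n)$ (whatever the constant) $|\mathcal W|$ can be of order $\exp\{\Theta((\hat K_1\hat K_2)^\ell\log n)\}$, which is super-polynomial; for growing $\ell$ the same union bound therefore yields a slack of order $\log|\mathcal W|/n$, which still vanishes for $\ell(n)$ as in Proposition~\ref{prop:order} but is not $O(\log n/n)$.
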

\begin{proof}
\emph{Step 1: a single exceptional set, uniform over types.} For
each $W^\ell\in\mathcal W$, apply Theorem~\ref{thm:conv} to $\Phi$ at
this $W^\ell$, with $\epsilon$ there set to $\epsilon+c$: this
excludes a set $E_{W^\ell}\subseteq\mathcal T_n(P^\ell)$ of size at
most $2n^{-\epsilon-c}|\mathcal T_n(P^\ell)|$, outside of which
$\Phi$'s output having type $W^\ell$ implies both
\[
L_1(\bx)\ge nR_1^*(W^\ell)-(\epsilon+c)\log n
\qquad\mbox{and}\qquad
L_1(\bx)+L_2(\bx)\ge nR^*(W^\ell)-(\epsilon+c)\log n
\]
--- a single exceptional set for both bounds, since
Theorem~\ref{thm:conv} already gives them together, not as two
separately-excluded events. Let $E\dfn\bigcup_{W^\ell\in\mathcal
W}E_{W^\ell}$; since $|\mathcal W|\le n^c$, the union bound gives
\[
|E|\ \le\ n^c\cdot2n^{-\epsilon-c}|\mathcal T_n(P^\ell)|\ =\
2n^{-\epsilon}|\mathcal T_n(P^\ell)|.
\]

\emph{Step 2: apply it at whichever type is actually realized.} Fix
any $\bx\notin E$. Since $\Phi$ meets both distortion constraints,
its output on $\bx$ has some joint type $W^{\ell,*}\in\mathcal W$
(distortion depends on $\bx$ and the reconstruction only through
their joint type). As $\bx\notin E_{W^{\ell,*}}$, the bounds above
hold at $W^\ell=W^{\ell,*}$, so
$\big(\rho_1(\bx),\rho_1(\bx)+\rho_2(\bx)\big)$ lies, up to $O(\log
n/n)$ slack, in the quadrant
\[
\{(R_1,R_2):R_1\ge R_1^*(W^{\ell,*}),\ R_1+R_2\ge R^*(W^{\ell,*})\},
\]
one of the quadrants unioned to form $\mathcal R(\bx)$. Hence the
rate pair lies in $\mathcal R(\bx)$, up to the same slack, for every
$\bx\notin E$ --- and $|E|\le2n^{-\epsilon}|\mathcal T_n(P^\ell)|$, as
claimed.
\end{proof}

This holds for every code --- type-committed, type-free, or
otherwise: each such code has its own exceptional set of at most
$2n^{-\epsilon}$ of the $\bx$'s, outside which it cannot achieve a
rate pair outside $\mathcal R(\bx)$, whichever type its own output
happens to realize. It is matched, on the nose, by
both achievability routes ahead, though for different reasons.
Theorem~\ref{thm:ach} (and, via the type-committed LZ route,
Theorem~\ref{thm:final}) realizes a single, chosen type $W^\ell$ on
\emph{every} $\bx$ it is run on, so Theorem~\ref{thm:conv} applies to
it directly, at that one type, with no need for the union-over-types
argument just given. Theorem~\ref{thm:tf-match}'s type-free
construction, by contrast, never commits to a specific type ---
exactly why it needs this result, region-level and unconditional in
$\bx$, rather than Theorem~\ref{thm:conv} applied at a single,
pre-chosen type.

Relative to \cite[Theorem 1]{merhav233}'s own single-stage converse,
this result is the stronger one in the one respect the proof of
Theorem~\ref{thm:conv} addresses: \cite{merhav233} bounds
the fraction of \emph{codewords} (distinct reconstructions)
satisfying the length bound, while the counting argument there, via
Lemma~\ref{lem:dc}'s type-match extension, converts this directly
into a bound on the fraction of \emph{source sequences} instead.

\subsection{Achievability}

The bound just proved is matched exactly: for every joint type
meeting both distortion budgets, a scheme exists reaching that
type's own rate pair. 

\begin{theorem}[Type-class achievability] \label{thm:ach}
Fix any joint type $W^\ell$ with $X$-marginal $P^\ell$, meeting both
distortion constraints, with $\mathcal T_n(W^\ell)\ne\emptyset$. For every
$\epsilon>0$ and all sufficiently large $n$, there is a two-stage scheme achieving:
\begin{equation}
L_1(\bx) \le nR_1^*(W^\ell) + (2+\epsilon)\log n + c, \qquad
L_1(\bx)+L_2(\bx) \le nR^*(W^\ell) + (4+\epsilon)\log n + c,
\end{equation}
with $R_1^*(W^\ell)$ and $R^*(W^\ell)$ as defined in
Theorem~\ref{thm:conv}, and $c=c(\epsilon)$.
\end{theorem}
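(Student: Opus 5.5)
The plan is to apply Lemma~\ref{lem:universal-ach} twice in a superposition (cloud) construction: a Stage~1 codebook drawn from $U_{Q_1^\ell}$, and, for each Stage~1 codeword, a Stage~2 cloud drawn from the uniform distribution on the conditional type class consistent with $W^\ell_{\hat X_2|\hat X_1}$.

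\emph{Stage~1.} Take $\mu=U_{Q_1^\ell}$. Since $|\mathcal T_n(Q_1^\ell)|\le\hat K_1^n$, we have $\mu(\bxh_1)\ge \hat K_1^{-n}$, so $B=\hat K_1$ works and any $A>\hat K_1$ is allowed. Set the target to $\mathcal E_1(\bx)=\mathcal T(W_1^\ell|\bx)$, i.e., the first codeword whose joint type with $\bx$ is exactly $W_1^\ell$, not merely one that meets $D_1$. This target is nonempty for every $\bx\in\mathcal T_n(P^\ell)$, because $\mathcal T_n(W^\ell)\ne\emptyset$ and block permutations act transitively on $\mathcal T_n(P^\ell)$. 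The lemma then gives $L_1(\bx)\le -\log U_{Q_1^\ell}[\mathcal T(W_1^\ell|\bx)]+(2+\epsilon)\log n+c=nR_1^*(W^\ell)+(2+\epsilon)\log n+c$. The lemma produces a single codebook that works simultaneously for every $\bx$ and every target, so no averaging over sources is needed. The joint type $W_1^\ell$ meets $D_1$, so the first distortion constraint holds automatically.

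\emph{Stage~2.} Given the Stage~1 codeword $\bxh_1$, draw a codebook from $\mu_2(\cdot|\bxh_1)$, the uniform distribution on $\mathcal T(W^\ell_{\hat X_1\hat X_2}|\bxh_1)$. This is shared randomness indexed by $\bxh_1$, which is legitimate because the decoder already knows $\bxh_1$. We have $\mu_2\ge\hat K_2^{-n}$. Because the lemma's guarantee is uniform over all targets, it suffices to apply it to the finitely many distinct $\bxh_1\in\mathcal T_n(Q_1^\ell)$, each with its own codebook. Set the target to $\mathcal E_2=\mathcal T(W^\ell|\bx,\bxh_1)$. It is nonempty because $(\bx,\bxh_1)$ has type $W_1^\ell$, which is a marginal of the nonempty $\mathcal T_n(W^\ell)$, and transitivity applies again. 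Since $W^\ell$ meets $D_2$, every hit satisfies the second distortion constraint. This yields $L_2(\bx)\le -\log\mu_2[\mathcal T(W^\ell|\bx,\bxh_1)]+(2+\epsilon)\log n+c$.

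\emph{Combining the stages (expected main obstacle).} The remaining step is to show that the product of the two success probabilities equals $U_{W^\ell}[\mathcal T(W^\ell|\bx)]$ up to polynomial factors. I will write $U_{W^\ell}$ here for the uniform law on the type class of the pair $(\bxh_1,\bxh_2)$ with type $W^\ell_{\hat X_1\hat X_2}$. By exact counting,
\[
U_{Q_1^\ell}[\mathcal T(W_1^\ell|\bx)]\cdot\mu_2[\mathcal T(W^\ell|\bx,\bxh_1)]
=\frac{|\mathcal T(W_1^\ell|\bx)|}{|\mathcal T_n(Q_1^\ell)|}\cdot\frac{|\mathcal T(W^\ell|\bx,\bxh_1)|}{|\mathcal T(W^\ell_{\hat X_1\hat X_2}|\bxh_1)|}.
\]
The numerator product equals $|\mathcal T(W^\ell|\bx)|$ exactly: each $\bxh_1$ in $\mathcal T(W_1^\ell|\bx)$ has the same number of completions, by block-permutation invariance, as in Lemma~\ref{lem:dc}. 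The denominator product equals $|\mathcal T_n(W^\ell_{\hat X_1\hat X_2})|$ exactly, by the same argument. So the product is exactly $U_{W^\ell}[\mathcal T(W^\ell|\bx)]=2^{-nR^*(W^\ell)}$, with no polynomial loss at all. Adding the two codelength bounds then gives the sum-rate bound with $(4+\epsilon)\log n+2c$, and renaming the constant gives the stated form.

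The subtle point I would verify carefully is the exact-counting argument in the combining step: the uniformity of the conditional counts must hold for the specific $\bxh_1$ selected by Stage~1. Because Stage~1 enforces type $W_1^\ell$ exactly, not merely distortion-feasibility, that uniformity holds for every selected $\bxh_1$, which removes the Jensen gap discussed in Section~\ref{sec:types}. A secondary issue is that $\epsilon$ and $c$ must be chosen independently of $\bxh_1$, which follows from the uniformity of $N(\epsilon,A,B)$ in the lemma.
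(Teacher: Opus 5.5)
Your proof is correct, and its architecture is the paper's. Stage~1 searches under the uniform measure on $\mathcal T_n(Q_1^\ell)$ for exact joint type $W_1^\ell$, so $-\log u_1=nR_1^*(W^\ell)$ holds by definition. Stage~2 is a cloud conditioned on the realized $\bxh_1$. Lemma~\ref{lem:universal-ach} is invoked once per stage.

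Where you part ways with the paper is Stage~2 and the combining step.

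\emph{The paper's route.} It draws the blocks of $\bxh_2$ i.i.d.\ from $W^\ell_{\hat X_2|\hat X_1}(\cdot|\hat x_{1,i})$ and estimates the success probability by a conditional divergence, which it recognizes as $\hat I_{W^\ell}(X;\hat X_2|\hat X_1)$. It then reaches $R^*(W^\ell)-R_1^*(W^\ell)$ via the chain rule, together with the identifications $R_1^*\approx\hat I(X;\hat X_1)/\ell$ and $R^*\approx\hat I(X;\hat X_1,\hat X_2)/\ell$.

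\emph{Your route.} You sample uniformly on $\mathcal T(W^\ell_{\hat X_1\hat X_2}|\bxh_1)$. Block-permutation invariance (the double counting behind Lemma~\ref{lem:dc}) then gives the exact identity $U_{Q_1^\ell}[\mathcal T(W_1^\ell|\bx)]\cdot\mu_2[\mathcal T(W^\ell|\bx,\bxh_1)]=2^{-nR^*(W^\ell)}$.

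\emph{What your route buys: exactness.} The i.i.d.\ conditional measure assigns to $\mathcal T(W^\ell|\bx,\bxh_1)$ exactly your $\mu_2$-probability, multiplied by the probability that the conditional type of $\bxh_2$ given $\bxh_1$ comes out equal to $W^\ell_{\hat X_2|\hat X_1}$. That extra factor, like the paper's other method-of-types estimates, costs a type-counting term. Its logarithm scales with the size of the $\ell$-block super-alphabet rather than being $O(\log m)$, and it is not visible in the stated $(4+\epsilon)\log n+c$. Your $\mu_2$ also meets the hypothesis of Lemma~\ref{lem:universal-ach} trivially, with $B=\hat K_2$.

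\emph{What the paper's route buys.} It yields the mutual-information reading $R^*-R_1^*\approx\hat I_{W^\ell}(X;\hat X_2|\hat X_1)/\ell$. Lemma~\ref{lem:uniform-cov} uses this later, but the theorem itself does not need it.

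One small slip: two $(2+\epsilon)\log n$ overheads sum to $(4+2\epsilon)\log n$, not $(4+\epsilon)\log n$. Run each stage at $\epsilon/2$, as the paper does.
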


Dividing by $n$ gives $\rho_1(\bx)\to R_1^*(W^\ell)$,
$\rho_1(\bx)+\rho_2(\bx)\to R^*(W^\ell)$, matching
Theorem~\ref{thm:conv} exactly, for \emph{every} such $W^\ell$: no
further restriction on $W^\ell$ is needed, since achievability and
converse are stated in terms of the same type-probability quantity
throughout.
\begin{proof}
\emph{Stage 1.} Since $\mathcal T_n(W^\ell)\ne\emptyset$, some $\hat{\bx}_1$
achieves $\hat P_{\bx\hat{\bx}_1}=W_1^\ell$ for $\bx\in
\mathcal T_n(P^\ell)$, so $u_1\dfn U_{Q_1^\ell}[\mathcal T(W_1^\ell|\bx)]>0$
and $R_1^*(W^\ell)<\infty$. $Q_1^\ell$, being uniform on a type class
of size at most $\hat K_1^n$, satisfies $Q_1^\ell(\bxh)=1/|\mathcal T_n(Q_1^\ell)|\ge \hat K_1^{-n}$ for every $\bxh$ in its
support: Lemma~\ref{lem:universal-ach}'s hypothesis, with $B=\hat K_1$.
Draw independent candidates $\bz$ i.i.d.\ $\sim
Q_1^\ell$ and search for the first one achieving
exactly joint type $W_1^\ell$ with $\bx$; this is
Lemma~\ref{lem:universal-ach} with $\mu=Q_1^\ell$ (any $A>\hat K_1$) and
$\mathcal T(W_1^\ell|\bx)$ in place of the target, giving
$L_1(\bx)\le-\log(u_1)+(2+\epsilon)\log n+c$, once $u_1$ itself is
computed.

Treating the $m=n/\ell$ blocks of $(\bx,\bz)$ as symbols over the
super-alphabets $\mathcal X^\ell,\Xh_1^\ell$: by the
method-of-types, the probability of a random i.i.d.\
sequence falling into a given type decays exponentially in the
divergence between that type and the underlying
distribution (e.g.\ \cite{csiszarkorner11}) --- applied here to the
pair-valued sequence $(\bx_i,\bz_i)$, $i=1,\dots,m$, with $\bx$ fixed
(so its own randomness is degenerate) and $\bz$'s blocks i.i.d.\
$\sim Q_1^\ell$, giving underlying distribution $P^\ell\times
Q_1^\ell$ and target type $W_1^\ell$ ---
\begin{equation}
u_1=\Pr\big[(\bx,\bz)\mbox{ has type }W_1^\ell\big]=2^{-mD(W_1^\ell\|P^\ell\times Q_1^\ell)+O(\log m)}.
\end{equation}
Since $D(W_1^\ell\|P^\ell\times Q_1^\ell)=\hat I_{W_1^\ell}(X;\hat
X_1)$ whenever $W_1^\ell$'s own $X$-marginal is $P^\ell$ (the
standard identity relating a joint type's divergence from the
product of its own marginals to its mutual information), and $\hat
I_{W_1^\ell}(X;\hat X_1)/\ell=R_1^*(W^\ell)$ by definition
(Theorem~\ref{thm:conv}),
\begin{equation}
u_1=\exp_2\{-m\hat I_{W_1^\ell}(X;\hat X_1)+O(\log m)\}=2^{-nR_1^*(W^\ell)+O(\log n)},
\end{equation}
giving $L_1(\bx)\le nR_1^*(W^\ell)+(2+\epsilon)\log n+c$ directly.

\emph{Stage 2.} Given $\hat{\bx}_1$,
draw $\hat{\bx}_2$'s blocks i.i.d., the $i$-th block sampled
$\sim W_{\hat X_2|\hat X_1}^\ell(\cdot|\hat x_{1,i})$ ($W^\ell$'s own $(\hat
X_2|\hat X_1)$ conditional law), and search for the first one
completing the \emph{full} triple to joint type $W^\ell$ with
$(\bx,\hat{\bx}_1)$; write $p$ for this event's per-candidate
probability.

The identical estimate applies conditionally, treating the type $W_1^\ell$ of $(\bx,\hat{\bx}_1)$
as fixed and $\hat{\bx}_2$'s blocks as
drawn i.i.d.\ from the conditional law $W_{\hat X_2|\hat X_1}^\ell(\cdot|a_1)$
given each block's own $\hat{\bx}_1$-value $a_1$: standard conditional
type-counting (again \cite{csiszarkorner11}) gives
\begin{equation}
p=\Pr[\mbox{full type }W^\ell]=\exp_2\{-mD(W^\ell\|W_1^\ell\times W_{\hat X_2|\hat
X_1}^\ell|W_1^\ell)+O(\log m)\},
\end{equation}
where $D(W^\ell\|W_1^\ell\times W_{\hat X_2|\hat X_1}^\ell|W_1^\ell)\dfn
\sum_{a,a_1}W_1^\ell(a,a_1)D(W^\ell(\cdot|a,a_1)\|W_{\hat X_2|\hat
X_1}^\ell(\cdot|a_1))$
is exactly the conditional mutual information $\hat I_{W^\ell}(X;\hat
X_2|\hat X_1)$ (the same identity as Stage~1's, now applied
conditionally on $\hat X_1$), so $p=\exp_2\{-m\hat I_{W^\ell}(X;\hat
X_2|\hat X_1)+O(\log m)\}$. By the
exact entropy chain rule, $\hat I_{W^\ell}(X;\hat X_1,\hat
X_2)=\hat I_{W_1^\ell}(X;\hat X_1)+\hat I_{W^\ell}(X;\hat X_2|\hat
X_1)$, and the identical type-probability estimate applied to the full
triple directly (uniform sampling of $(\hat X_1,\hat X_2)$ pairs
within their own joint type class, in place of $\hat X_1$ alone)
gives $R^*(W^\ell)=\hat I_{W^\ell}(X;\hat X_1,\hat X_2)/\ell$; so
$p=2^{-n[R^*(W^\ell)-R_1^*(W^\ell)]+O(\log n)}$ exactly, with no
assumption that $X-\hat X_1-\hat X_2$ forms a Markov chain: the
sampling is conditional on $\hat X_1$ alone (legitimate, since $\hat
X_1$ is known to both encoder and decoder after Stage~1), while the
success event checks the full joint type with $(\bx,\hat{\bx}_1)$
together, not merely the $(\hat X_1,\hat X_2)$ marginal relationship.
Applying Lemma~\ref{lem:universal-ach} again, conditionally, gives
$L_2(\bx)\le n[R^*(W^\ell)-R_1^*(W^\ell)]+(2+\epsilon)\log n+c$, so
$L_1(\bx)+L_2(\bx)\le nR^*(W^\ell)+2(2+\epsilon)\log n+2c$, which is the
claimed bound after relabeling $2\epsilon,2c$ as $\epsilon,c$ (i.e.\
for a target $\epsilon$, apply both stages with $\epsilon/2$).
\end{proof}

\subsection{The achievable region and its Pareto frontier}
\label{sec:ach-region}

Ranging the matching converse-achievability pair over every type
gives the full region, not just one corner of it: combining
Theorem~\ref{thm:conv-region}'s own converse with
Theorem~\ref{thm:ach}'s own achievability, $\mathcal R(\bx)$ ---
defined above --- is exactly the achievable region.

The Pareto-optimal points of $\mathcal R(\bx)$ --- those undominated in
both coordinates --- form the lower envelope $R_2^{\min}(R_1) =
\min_{W^\ell\in\mathcal
W:R_1^*(W^\ell)\le R_1}\max(0,R^*(W^\ell)-R_1)$; not every $W^\ell\in
\mathcal W$ need itself be Pareto-optimal, since the union
automatically discards any that are dominated. A specific frontier
point is reached by fixing one coordinate as a budget and minimizing
the other over $\mathcal W$ ($\epsilon$-constraint method): for a
target $R_1$-budget $b$, $\min_{W^\ell:R_1^*(W^\ell)\le b}R^*(W^\ell)-b$;
symmetrically for a total-budget $c$, $\min_{W^\ell:R^*(W^\ell)\le
c}R_1^*(W^\ell)$. Since $\mathcal W$ is a finite set (finitely many
joint types on a finite alphabet, for fixed $n,\ell$), the minimum
defining $R_2^{\min}(R_1)$ is a minimum over finitely many values:
the frontier is always achieved by some $W^\ell\in\mathcal W$, not
merely approached.

Which $W^\ell\in\mathcal W$ to target is not resolved by either
theorem --- it is a free design choice, exactly as the choice of
auxiliary distribution is free in Rimoldi's classical
characterization; the $\epsilon$-constraint recipe above describes
how to make it for a desired operating point.

\section{From types to Lempel--Ziv complexity}
\label{sec:lz}

\subsection{Why this is not automatic}
\label{sec:why-hard}

Section~\ref{sec:motivation} already located the difficulty: $\bxh_1$
must also serve as good side information for Stage~2, not merely
satisfy $D_1$ on its own. Section~\ref{sec:types}'s own joint-type
matching exists precisely to avoid the Jensen obstruction explained
there. The same requirement carries over unchanged when Stage~1
draws from $U$ --- the full, unrestricted, LZ-weighted measure ---
instead of a type-restricted one: nothing about switching measures
removes the need to fix $\hat x_1$'s \emph{type}, not merely its
distortion.

The underlying idea is as follows. Build joint typicality between $\bxh_1$ and $\bx$
directly into Stage~1's search criterion, so Stage~2 conditions on a
candidate is \emph{guaranteed} to keep its
own rate low
(Theorem~\ref{thm:final}). This is the same criterion
Section~\ref{sec:types} already uses for its own, type-restricted
Stage~1; checking it directly against $U$ --- with no shortcut
around it --- is what the rest of this section works out.
For a full description of the random coding scheme, we first pause
to define certain ingredients associated with the LZ78 algorithm
and its conditional version.

For $\bxh_1\in\Xh_1^n$: the \emph{incremental
parsing procedure} of the LZ78 algorithm \cite{zivlempel78}
sequentially parses $\bxh_1$ into phrases, each new phrase being the
shortest string that has not been encountered before as a parsed
phrase, with the possible exception of the last phrase, which may be
incomplete. For example, the incremental parsing of $\hat
x_1=011010011000100$ is $0,1,10,100,11,00,01,00$. Let $c(\bxh_1)$ denote
the resulting number of phrases (eight, in this example), and let
$LZ(\bxh_1)$ denote the length, in bits, of the LZ78 binary compressed
code for $\bxh_1$ --- approximately $c(\bxh_1)\log c(\bxh_1)$, a standard
mechanical fact about the algorithm's own construction. Let
$U(\bxh_1)\propto2^{-LZ(\bxh_1)}$ on $\Xh_1^n$ --- normalizable since
$\Xh_1^n$ is finite, and in fact $\sum_{\bxh_1}2^{-LZ(\bxh_1)}\le1$ by
Kraft's inequality (LZ78 being uniquely decodable), a fact used
later --- the same, fixed,
source-independent measure
throughout; no restriction of its support is used anywhere below.

Similarly, for a pair of sequences ($\bxh_1,\bxh_2)$ of the same length $n$:
apply LZ78's incremental parsing to the sequence of \emph{pairs}
$((\hat x_{2,1},\hat x_{1,1}),\dots,(\hat x_{2,n},\hat x_{1,n}))$,
treating each pair
as one product-alphabet symbol \cite{ziv85}; let $c(\bxh_1,\bxh_2)$
denote the resulting number of distinct phrases, $c'(\bxh_1)$ --- the
number of distinct $\bxh_1$-phrases this same joint parse induces
(generally different from $\bxh_1$'s own, unconditional phrase
count), $\bxh_1(l)$ the $l$-th such phrase, and $c_l(\bxh_2|
\bxh_1)$ the number of its occurrences --- equivalently, the number of
distinct $\bxh_2$-phrases jointly appearing with it --- for
$l=1,\dots,c'(\bxh_1)$. The \emph{conditional LZ complexity} of
$\bxh_2$ given $\bxh_1$ is
\begin{equation}
\rho_{LZ}(\bxh_2|\bxh_1) \;\dfn\; \frac1n\sum_{l=1}^{c'(
\bxh_1)}c_l(\bxh_2|\bxh_1)\log c_l(\bxh_2|\bxh_1)
\end{equation}
\cite{uyematsukuzuoka03} --- the individual-sequence analogue of
conditional entropy, built to exploit empirical correlation between
$\bxh_2$ and $\bxh_1$, not merely $\bxh_2$'s own statistics: $\bxh_1$
enters throughout as side information available to both encoder and
decoder, never as a static resource $\bxh_2$ draws phrases from. Let
$LZ(\bxh_2|\bxh_1)$
denote the codelength of the conditional LZ78 coding scheme built on
this side information \cite{ziv85}: $LZ(\bxh_2|\bxh_1)\le
n\rho_{LZ}(\bxh_2|\bxh_1)+n\hat\epsilon(n)$, $\hat\epsilon(n)\to0$
--- the conditional analogue of the unconditional mechanical bound
above. Let
$V(\bxh_2|\bxh_1)\propto2^{-LZ(\bxh_2|\bxh_1)}$ on $\Xh_2^n$
be the resulting conditional measure --- Stage~2's satellite
codebook is drawn from $V(\cdot|\bxh_1)$ once $\bxh_1$ is fixed.
The following lemma will be useful throughout --- see, e.g., \cite{merhav233},
\cite{merhavguessing20},
as well as
\cite{plotnikweinbergerziv92} and \cite[Lemma 13.5.5]{coverthomas06}
for somewhat different, but intimately related family of inequalities,
collectively referred to as
{\em Ziv's inequality}.

\begin{lemma}[Ziv's inequality] \label{lem:F0}
For any $\bxh\in\calT_n(Q^\ell)$,
\begin{equation}
LZ(\bxh)\le \frac{n\hat H(Q^\ell)}{\ell} + n\Delta_n(\ell), \qquad
\Delta_n(\ell)\to1/\ell.
\end{equation}
\end{lemma}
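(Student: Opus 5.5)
The plan is the standard Ziv-inequality argument, specialized to a sequence whose $\ell$-block type is $Q^\ell$. There are three ingredients: the LZ78 codelength bound in terms of the phrase count, a phrase-counting estimate relating $c(\bxh)$ to the block entropy, and an absorption of lower-order terms into $\Delta_n(\ell)$.

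\emph{Step 1 (codelength vs.\ phrase count).} Write $c=c(\bxh)$. By the mechanical construction of LZ78, each phrase is encoded by a pointer to a previous phrase plus one new symbol. This gives
\[
LZ(\bxh)\le c\log c + c\log(2\hat K)+O(1),
\]
and the standard bound $c\le n\log\hat K/((1-\varepsilon_n)\log n)$ holds. Hence the term $c\log(2\hat K)$ is $O(n/\log n)$, which will be absorbed into $n\Delta_n(\ell)$.

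\emph{Step 2 (Ziv's inequality for blocks).} This step is the heart of the proof. I would follow the argument of \cite[Lemma 13.5.5]{coverthomas06} and \cite{plotnikweinbergerziv92}, applied to the block-i.i.d.\ measure $P(\bxh)=\prod_{i=1}^m Q^\ell(\hat x_i)$, where the $\hat x_i$ are the $\ell$-blocks. The key fact is that for $\bxh\in\calT_n(Q^\ell)$,
\[
-\log P(\bxh)=m\hat H(Q^\ell)=n\hat H(Q^\ell)/\ell .
\]
Next, I would handle phrases that straddle block boundaries. Use $\ell$ phase shifts, or more simply classify each phrase by its length and by its starting phase modulo $\ell$. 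Within each class the phrases are distinct, and a phrase aligned to block boundaries has a well-defined probability under a suitably shifted block-memoryless measure. Since distinct phrases in a class sum to probability at most $1$, the usual convexity (log-sum) argument gives
\[
c\log c\le -\log P(\bxh)+c\log\big(\ell\cdot(\text{number of length classes})\big)+c\cdot O(\ell\log\hat K)/(\text{typical phrase length}).
\]
The boundary effect arises because a phrase not aligned to blocks covers partial blocks at its two ends. Each such phrase costs at most $2\ell\log\hat K$ bits relative to the block-product probability. The total boundary cost is therefore at most $2c\ell\log\hat K$, and the alignment-class bookkeeping costs at most $c\log(\ell\cdot\text{length classes})$.

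\emph{Step 3 (collecting terms).} Dividing by $n$, the overhead terms are of the form $(c/n)\cdot O(\ell+\log\log n)$. Since $c/n=O(1/\log n)$ and $\ell=\Theta(\log n)$, these terms are $O(1)$, not $o(1)$. This is exactly why the statement only claims $\Delta_n(\ell)\to 1/\ell$ rather than $0$. The main obstacle is to sharpen this bookkeeping so that the surviving constant is precisely $1/\ell$ (in the normalization where entropies are per $\ell$-block). I would first try the cleaner route used in \cite{merhav233}: compare $LZ(\bxh)$ with the length of a two-part code that describes the type $Q^\ell$ and then the index within $\calT_n(Q^\ell)$. The LZ78 code is asymptotically no worse than any block-$\ell$ finite-state code, up to the standard $O(\log\log n/\log n)$ term plus a $1/\ell$ term for block misalignment. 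The second part of that two-part code costs $\log|\calT_n(Q^\ell)|\le n\hat H(Q^\ell)/\ell$, and the type description costs $O(\hat K^\ell\log n)$, which is $o(n)$ for $\ell$ a small enough multiple of $\log n$. The residual $1/\ell$ is then identified as the alignment cost, and everything else vanishes as $n\to\infty$.
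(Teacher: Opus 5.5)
\textbf{The gap is in Step~3, and it comes from misreading the statement.} The limit $\Delta_n(\ell)\to1/\ell$ is taken as $n\to\infty$ with $\ell$ fixed. The growth $\ell=\ell(n)=\Theta(\log n)$ enters only later, in Proposition~\ref{prop:order}.

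In that regime your Steps~1--2 already prove the lemma, with room to spare, once two details are made precise.
\begin{itemize}
\item Charge the partial blocks at the two ends of each phrase with the \emph{uniform} measure. After dropping the nonnegative terms $-\log Q^\ell$ of split blocks, the boundary cost is then at most $2(\ell-1)\log\hat K$ per phrase, as you claim. If you use marginals of $Q^\ell$ on the pieces instead, the cost of one split can be as large as $\log m$.
\item Bound the entropy of the (phase, length) label by $\log\ell+\log(n/c)+\log e$, the maximum entropy of a positive integer with mean $n/c$. Do not bound it by the logarithm of the number of length classes: that leaves a term $(c/n)\log n$, of order one in the worst case, even for fixed $\ell$.
\end{itemize}
The log-sum argument then gives
\[
c\log c\le\frac{n\hat H(Q^\ell)}{\ell}+c\Big[2\ell\log\hat K+\log\ell+\log\frac{n}{c}+O(1)\Big].
\]
Since $c/n=O(1/\log n)$, the bracket contributes $O\big((\ell\log\hat K+\log\log n)/\log n\big)\to0$. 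So you get $\Delta_n(\ell)\to0$, which is stronger than what is claimed. The order-one overhead you found appears only after substituting $\ell=\Theta(\log n)$. It is a constant that does not decay with $\ell$, so it cannot be what the $1/\ell$ in the statement accounts for. That overhead is relevant to the later choice of $\ell(n)$ in Proposition~\ref{prop:order}, where it does not obviously vanish, but it is not part of this lemma.

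\textbf{The fallback route you end with does not repair anything.}
\begin{itemize}
\item The two-part code (type of $\bxh$, then index in $\calT_n(Q^\ell)$) is not a finite-state encoder, because the index depends on the whole sequence. So the Ziv--Lempel comparison theorem does not apply to it.
\item The overhead of that theorem scales like $(c/n)\log s$ with $s\approx\hat K^\ell$ states. That is $O(\ell\log\hat K/\log n)$, not $O(\log\log n/\log n)$ uniformly in $\ell$, so it has exactly the behavior you were trying to escape.
\item The ``$1/\ell$ for block misalignment'' is inserted to match the statement rather than derived.
\end{itemize}

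The paper gives no proof and cites the lemma as standard. The route it has in mind is the one it spells out for the conditional version, Lemma~\ref{lem:condF0}. Take as the finite-state encoder a Shannon code on $\ell$-blocks matched to $Q^\ell$. Its length on $\bxh$ is $\sum_{i=1}^m\lceil-\log Q^\ell(\hat x_i)\rceil\le n\hat H(Q^\ell)/\ell+n/\ell$; the ceiling, one bit per block, is exactly where the $1/\ell$ comes from. Then the Ziv--Lempel finite-state converse puts $LZ(\bxh)$ within $o(n)$ of that length for fixed $\ell$.

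That route is a one-line reduction to a cited theorem and carries over verbatim to side information. Your direct Ziv-inequality argument, with Step~3 dropped, is a valid self-contained alternative that even avoids the ceiling loss.
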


This is a standard fact about the LZ78 algorithm, well known from
\cite{zivlempel78} (see also \cite{merhav233}), and not proved here.

Consequently, for any $\calS\subseteq \mathcal T_n(Q^\ell)$,
\begin{equation}
\label{eq:zivset}
U(\calS)\ge U_{Q^\ell}(\calS)\cdot2^{-n\Delta_n(\ell)}.
\end{equation}
To see why (\ref{eq:zivset}) holds true, use $U(\bxh)\ge2^{-LZ(\bxh)}$ (which holds thanks to Kraft's inequality, since
the normalizing constant of $U(\cdot)$ is at most 1, owing to the unique
decodability of the LZ78 algorithm), combined with the pointwise
bound $2^{-LZ(\bxh)}\ge2^{-n\hat H(Q^\ell)/\ell-n\Delta_n(\ell)}$
from Lemma~\ref{lem:F0}, and sum over $\bxh\in\calS$; then use the standard
type-size formula
$|\mathcal T_n(Q^\ell)|=2^{n\hat H(Q^\ell)/\ell+O(\log n)}$ (its own $O(\log
n)$ error already folded into $\Delta_n(\ell)$'s own $O(\log n/n)$
term):
\begin{eqnarray}
U(\calS) = \sum_{\bxh\in \calS}U(\bxh) &\ge& \sum_{\bxh\in\calS}2^{-LZ(\bxh)}
\ \ge\ |\calS|\cdot2^{-n\hat H(Q^\ell)/\ell-n\Delta_n(\ell)}\nonumber\\
&=& |\calS|\cdot|\mathcal T_n(Q^\ell)|^{-1}\cdot2^{-n\Delta_n(\ell)}
\ =\ U_{Q^\ell}(\calS)\cdot2^{-n\Delta_n(\ell)},
\end{eqnarray}
using $U_{Q^\ell}(\calS)\dfn|\calS|/|\mathcal T_n(Q^\ell)|$ by definition (no
restriction or renormalization of $U$ needed, since $\calS$ is already
contained in a single type).

This second, set-level bound is the bridge Theorem~\ref{thm:final}
below uses to identify what its search achieves with
Section~\ref{sec:types}'s type-based targets.

\begin{lemma}[Ziv's inequality, conditional version] \label{lem:condF0}
For $\bxh_2$ of the conditional type consistent with $W^\ell$ given
$\bxh_1$:
\begin{equation}
LZ(\bxh_2|\bxh_1)\le n\hat H_{W^\ell}(\hat X_2|
\hat X_1)/\ell+n\Delta_n'(\ell), \qquad \Delta_n'(\ell)\to0.
\end{equation}
\end{lemma}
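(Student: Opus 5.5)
The plan is to follow the standard proof of the unconditional Ziv inequality (Lemma~\ref{lem:F0}), now applied to the joint incremental parse of the pair sequence $(\bxh_1,\bxh_2)$. The argument splits into a combinatorial bound on the joint parse and an accounting bound for the conditional code.

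\emph{Accounting step.} The bound $LZ(\bxh_2|\bxh_1)\le n\rho_{LZ}(\bxh_2|\bxh_1)+n\hat\epsilon(n)$ is already stated above. It therefore suffices to show
\[
\rho_{LZ}(\bxh_2|\bxh_1)\le \hat H_{W^\ell}(\hat X_2|\hat X_1)/\ell+\delta_n(\ell),
\qquad \delta_n(\ell)\to 0,
\]
and then to absorb $\hat\epsilon(n)$ into $\Delta'_n(\ell)$.

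\emph{Grouping step.} To bound $\rho_{LZ}$, I will start from the joint parse into $c=c(\bxh_1,\bxh_2)$ distinct pair-phrases. I group the phrases by their $\bxh_1$-component $\bxh_1(l)$, and further by their length and by the alignment of each phrase's starting position modulo $\ell$. Within a group that shares the same $\bxh_1$-phrase, the $\bxh_2$-phrases are distinct. So for each fixed $l$, the $c_l$ $\bxh_2$-phrases are distinct strings of that length.

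\emph{Probabilistic (Ziv) step.} Next I introduce the conditional block law $W^\ell_{\hat X_2|\hat X_1}$, extended by shifting and concatenation to a conditional finite-state probability $P(\cdot|\cdot)$ on phrases, averaged over the $\ell$ phases. Two facts follow. First, distinctness within a group gives $\sum_{\text{phrases in group}}P(\bxh_2\text{-phrase}\mid\bxh_1\text{-phrase})\le 1$. Second, the log-sum / Ziv inequality converts this into
\[
\sum_l c_l\log c_l\le -\log P(\bxh_2|\bxh_1)+(\text{number of groups})\cdot O(\log(n/c)).
\]
Since $\bxh_2$ has exactly the conditional type $W^\ell_{\hat X_2|\hat X_1}$ given $\bxh_1$, averaging $-\log P(\bxh_2|\bxh_1)$ over the $\ell$ phases yields $n\hat H_{W^\ell}(\hat X_2|\hat X_1)/\ell$. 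The boundary effects at block edges cost $O(n\log|\Xh_2|/\ell)$.

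\emph{Main obstacle.} This is the step I expect to be hardest, and the source of the edge terms just mentioned. The $\ell$-block alignment interacts with the phrase boundaries, and phrases cut across blocks, so the phase averaging has to be handled carefully. I would handle it exactly as in \cite{merhav233}'s proof of Lemma~\ref{lem:F0}, replacing the marginal law by the conditional one throughout.

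\emph{Remaining terms.} The other residual terms all vanish. The joint phrase count satisfies $c\le O(n/\log n)$ (the standard LZ78 bound over the product alphabet), which makes $(\text{groups})\cdot\log(n/c)/n\to0$.

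\emph{Gap in the limit.} The stated limit $\Delta'_n(\ell)\to 0$, as opposed to $\to 1/\ell$ as in Lemma~\ref{lem:F0}, only follows if the block-edge term can be avoided. To get it, I would conditionally code per $\ell$-phase, or let $\ell\to\infty$ slowly. Failing that, the honest conclusion is $\Delta'_n(\ell)\to O(1/\ell)$, which suffices for everything downstream.
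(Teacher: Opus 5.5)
Your route differs from the paper's. The paper never assigns probabilities to phrases. It realizes a conditional block Shannon code on the $\ell$-blocks of $\bxh_2$ (with those of $\bxh_1$ as side information) as a buffered finite-state encoder, and invokes Ziv's finite-state converse with side information \cite{ziv85}, which lower-bounds any such encoder's length by the conditional LZ complexity. A direct Ziv-inequality argument is a legitimate alternative. Your accounting step is fine, and so is the distinctness of $\bxh_2$-phrases within an $\bxh_1$-group. The central step, however, fails as written. Distinctness yields $\sum_{\rm group}P(\bxh_2\mbox{-phrase}|\bxh_1\mbox{-phrase})\le1$ only if a phrase's probability depends on nothing beyond the data defining its group. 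A probability inherited from the block law $W^\ell_{\hat X_2|\hat X_1}$ depends on more. On a grid block cut by the phrase, it depends on that block's $\bxh_1$- and $\bxh_2$-symbols preceding the phrase. It also depends, with no counterpart in Lemma~\ref{lem:F0}, on the $\bxh_1$-symbols of the last block lying \emph{after} the phrase ends, since $W^\ell_{\hat X_2|\hat X_1}(\cdot|a_1)$ conditions on the whole block $a_1$. Grouping by $\bxh_1$-phrase, length and phase does not fix this context, so the Kraft sum is unavailable. Making the law causal to remove the lookahead (a finite-state conditional probability in Ziv's sense) does not help: it only bounds $\rho_{LZ}$ by the causally conditioned block entropy, which can far exceed $\hat H_{W^\ell}(\hat X_2|\hat X_1)$ (take $\hat x_{2,i}=\hat x_{1,i+1}$). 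So deferring to ``exactly as in the proof of Lemma~\ref{lem:F0}'' does not cover this step. Separately, there is nothing to average over phases. $W^\ell$ is the type on the fixed aligned grid, and on a shifted grid the blocks of $(\bxh_1,\bxh_2)$ need not have conditional type $W^\ell$, nor even positive $W^\ell$-probability. Phase-averaging $-\log P$ therefore does not give $n\hat H_{W^\ell}(\hat X_2|\hat X_1)/\ell$.

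The fix is to give each joint phrase a \emph{restarted} probability that depends only on its own $\bxh_1$-part and its starting offset $\varphi$ modulo $\ell$ relative to the aligned grid. Take the product of $W^\ell_{\hat X_2|\hat X_1}(b_2|b_1)$ over the grid blocks lying entirely inside the phrase, times the uniform distribution on $\Xh_2$-strings over its at most two partial edge blocks. For fixed $(\bxh_1(l),\varphi)$ this is a probability distribution on $\bxh_2$-phrases. Distinctness and concavity of $\log$ then give $\sum_{l,\varphi}c_{l,\varphi}\log c_{l,\varphi}\le\sum_i\log(1/P_i)$ directly. No $\log(n/c)$ term arises, since $\rho_{LZ}$ is already a sum over $\bxh_1$-groups. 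Also $\sum_l c_l\log c_l\le\sum_{l,\varphi}c_{l,\varphi}\log c_{l,\varphi}+c\log\ell$. The interior blocks of distinct phrases are distinct grid blocks with finite, nonnegative $-\log W^\ell_{\hat X_2|\hat X_1}$, so they contribute at most $m\hat H_{W^\ell}(\hat X_2|\hat X_1)=n\hat H_{W^\ell}(\hat X_2|\hat X_1)/\ell$. The edges contribute at most $2c\ell\log\hat K_2$. Since $c=O(n/\log n)$, the total excess is $O(n\ell/\log n)=o(n)$ for fixed $\ell$. Your $O(n\log|\Xh_2|/\ell)$ edge estimate charges every block rather than only the at most $2c$ blocks cut by phrase boundaries; that is what pushed you to the $O(1/\ell)$ hedge. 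Repaired this way, your argument proves the statement as written, $\Delta_n'(\ell)\to0$. In this respect it is sharper than the paper's route: a block Shannon code pays up to one bit per $\ell$-block, so the finite-state reduction naturally leaves an extra $1/\ell$, just as in Lemma~\ref{lem:F0} and consistent with the caveat in Proposition~\ref{prop:order}. Avoiding that would require coding super-blocks and letting their length grow. What the paper's route buys is brevity, by outsourcing all the parsing combinatorics to \cite{ziv85}.
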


This is the direct conditional analogue of Lemma~\ref{lem:F0}, for
Ziv's conditional LZ construction, first established in
\cite{merhav00}. 
This bound holds by the following consideration: realize a
conditional block Shannon code on the
$\ell$-blocks of $\hat{\bx}_2$, using the blocks of $\hat{\bx}_1$ as
side information. Such an encoder can be implemented as a finite-state encoder.
Ziv and Lempel's own
finite-state converse, extended to side information available at
both encoder and decoder \cite{ziv85}, lower-bounds any such
encoder's codelength by (essentially) the conditional LZ complexity,
for every individual sequence pair. 

Now, observe that 
by Lemma~\ref{lem:F0}, $LZ(\bxh)\le
n(1+\epsilon_n)\log \hat K$ for every $\bxh\in\Xh^n$ (the worst case,
exactly as in \cite{merhav233}), so $U(\bxh)\ge2^{-LZ(\bxh)}\ge2^{-n(1+\epsilon_n)\log \hat K}$ uniformly: $\mu=U$ satisfies
Lemma~\ref{lem:universal-ach}'s hypothesis with
$B=\hat K^{1+\epsilon_n}\to \hat K$. Lemma~\ref{lem:universal-ach} is used
below, and throughout this section, at $\mu=U$.

For $\bx,\bxh_1$ of joint type $W_1^\ell$, define
$\mathcal T(W^\ell|\bx,\bxh_1)\dfn\{\bxh_2:~
\hat P_{\bx\bxh_1\bxh_2}=W^\ell\}$.
By the 
method of types \cite{csiszarkorner11},
\begin{equation}
\label{eq:bprime}
|\mathcal T(W^\ell|\bx,\bxh_1)| = 2^{m\hat H_{W^\ell}(\hat X_2|X,\hat X_1)+O(\log m)}.
\end{equation}

\begin{lemma}[Stage-2 coverage, for every candidate of the right
type] \label{lem:uniform-cov}
Fix a $D_1$-compatible type $W_1^\ell$.
For every
$\bxh_1\in\mathcal T(W_1^\ell|\bx)$
\begin{equation}
V[\mathcal T(W^\ell|\bx,\bxh_1)|\bxh_1]\ge 2^{-n[R^*(W^\ell)-R_1^*(W^\ell)]-n\Delta_n'(\ell)+o(n)}.
\end{equation}
\end{lemma}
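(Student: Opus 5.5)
The argument mirrors the derivation of (\ref{eq:zivset}), now conditionally on $\bxh_1$. I would lower-bound $V(\cdot|\bxh_1)$ pointwise on the target set and multiply by the set's size.

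\emph{Step 1 (pointwise bound).} Since conditional LZ78 is uniquely decodable given the side information $\bxh_1$, Kraft's inequality gives $\sum_{\bxh_2}2^{-LZ(\bxh_2|\bxh_1)}\le1$. Hence $V(\bxh_2|\bxh_1)\ge2^{-LZ(\bxh_2|\bxh_1)}$ for every $\bxh_2$, with no need to restrict or renormalize. Every $\bxh_2\in\mathcal T(W^\ell|\bx,\bxh_1)$ has joint type with $\bxh_1$ equal to $W_{\hat X_1\hat X_2}^\ell$, which is exactly the conditional type consistent with $W^\ell$ given $\bxh_1$. Lemma~\ref{lem:condF0} therefore yields
\[
V(\bxh_2|\bxh_1)\ge 2^{-n\hat H_{W^\ell}(\hat X_2|\hat X_1)/\ell-n\Delta_n'(\ell)}
\]
uniformly over the target set.

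\emph{Step 2 (counting).} Summing over the target set and using (\ref{eq:bprime}), with $m=n/\ell$, gives
\[
V[\mathcal T(W^\ell|\bx,\bxh_1)|\bxh_1]\ge 2^{-m[\hat H_{W^\ell}(\hat X_2|\hat X_1)-\hat H_{W^\ell}(\hat X_2|X,\hat X_1)]-n\Delta_n'(\ell)+O(\log m)}.
\]
The exponent's bracket is $\hat I_{W^\ell}(X;\hat X_2|\hat X_1)$. The hypothesis $\bxh_1\in\mathcal T(W_1^\ell|\bx)$ is used only here: it makes $(\bx,\bxh_1)$ of type $W_1^\ell$, so that (\ref{eq:bprime}) applies and the target set is nonempty. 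Nonemptiness holds because $\mathcal T_n(W^\ell)\ne\emptyset$ and a common block permutation carries any realizing triple onto one whose first two components equal $(\bx,\bxh_1)$.

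\emph{Step 3 (identification with the rates).} The proof of Theorem~\ref{thm:ach} established that
\[
R^*(W^\ell)-R_1^*(W^\ell)=\hat I_{W^\ell}(X;\hat X_2|\hat X_1)/\ell+O(\log n/n).
\]
This follows from $R_1^*=\hat I_{W_1^\ell}(X;\hat X_1)/\ell$, $R^*=\hat I_{W^\ell}(X;\hat X_1,\hat X_2)/\ell$ (each up to $O(\log n/n)$) and the chain rule. Substituting, $m\hat I_{W^\ell}(X;\hat X_2|\hat X_1)=n[R^*(W^\ell)-R_1^*(W^\ell)]+O(\log n)$, and absorbing all $O(\log n)$ terms into $o(n)$ gives the claim.

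\emph{Main obstacle.} The delicate step is Step 1, namely the applicability of Lemma~\ref{lem:condF0} to every $\bxh_2$ in the target set. The lemma's hypothesis involves only the $(\hat X_1,\hat X_2)$ conditional type, whereas membership in the target set also constrains the relation of $\bxh_2$ to $\bx$. Membership implies the needed marginal condition, so the lemma applies, and it must do so uniformly in $\bxh_2$. The $-n\Delta_n'(\ell)$ term cannot be absorbed into $o(n)$ unless $\ell\to\infty$, which is why the statement keeps it explicit. The second point to check is that the normalizing constant of $V(\cdot|\bxh_1)$ is at most $1$. This requires the conditional LZ code to be uniquely decodable for each fixed $\bxh_1$, which holds for Ziv's conditional scheme~\cite{ziv85}, since its decoder reconstructs the joint parse from $\bxh_1$.
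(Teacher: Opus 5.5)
Your proof is correct and follows the paper's argument essentially step for step. It uses the pointwise bound $V(\bxh_2|\bxh_1)\ge 2^{-LZ(\bxh_2|\bxh_1)}$ together with Lemma~\ref{lem:condF0} on the target set, multiplies by the conditional type-class size (\ref{eq:bprime}), and identifies the resulting conditional mutual information with $R^*(W^\ell)-R_1^*(W^\ell)$. You also make explicit two points the paper leaves implicit here, and both are right: the Kraft bound on the normalizer of $V(\cdot|\bxh_1)$, which the paper invokes only later in Section~\ref{sec:domination}, and nonemptiness of the target set via a common block permutation.
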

\begin{proof}
By Lemma~\ref{lem:condF0}, every $\bxh_2$ with $(\bxh_1,\bxh_2)$
of type $W_{\hat X_2|\hat X_1}^\ell$
satisfies $LZ(\bxh_2|\bxh_1)\le n\hat
H_{W^\ell}(\hat X_2|\hat X_1)/\ell+n\Delta_n'(\ell)$.
Summing $2^{-LZ(\bxh_2|\bxh_1)}\ge2^{-n\hat
H_{W^\ell}(\hat X_2|\hat X_1)/\ell-n\Delta_n'(\ell)}$ over
$\mathcal T(W^\ell|\bx,\bxh_1)$ and using the count~(\ref{eq:bprime})
$|\mathcal T(W^\ell|\bx,\bxh_1)|=2^{n\hat H_{W^\ell}(\hat X_2|X,\hat X_1)/\ell+o(n)}$
\begin{eqnarray}
V[\mathcal T(W^\ell|\bx,\bxh_1)|\bxh_1] &\ge& |\mathcal
T(W^\ell|\bx,\bxh_1)|\cdot2^{-n\hat H_{W^\ell}(\hat X_2|\hat X_1)/\ell-n\Delta_n'(\ell)}\nonumber\\
&=& 2^{-n[\hat H_{W^\ell}(\hat X_2|\hat X_1)-\hat H_{W^\ell}(\hat X_2| X,\hat X_1)]/\ell-n\Delta_n'(\ell)+o(n)}.
\end{eqnarray}
The bracketed difference is exactly the conditional mutual
information $\hat I_{W^\ell}(X;\hat X_2|\hat X_1)=
R^*(W^\ell)-R_1^*(W^\ell)$ (as established just
above Theorem~\ref{thm:ach}), giving the claim. 
\end{proof}

\begin{theorem}[LZ achievability matches the type-based converse]
\label{thm:final}
Fix any $W^\ell\in\mathcal W$. There is a two-stage LZ78-based
random-coding scheme achieving, for all sufficiently large $n$ and every $\bx\in\mathcal T_n(P^\ell)$,
\begin{eqnarray}
L_1(\bx) &\le& nR_1^*(W^\ell) + n\Delta_n(\ell) + O(\log n),\nonumber\\
L_1(\bx)+L_2(\bx) &\le& nR^*(W^\ell) + n\big[\Delta_n(\ell)+\Delta_n'(\ell)\big] + o(n) + O(\log n).
\end{eqnarray}
\end{theorem}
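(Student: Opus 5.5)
The scheme mirrors the proof of Theorem~\ref{thm:ach}, with the type-uniform measures replaced by $U$ and $V(\cdot|\bxh_1)$. The search criteria stay exactly the same. Lemma~\ref{lem:universal-ach} supplies the codelengths, so everything reduces to lower-bounding the two target probabilities under the LZ measures.

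\emph{Stage 1.} Draw a codebook of $A^n$ independent candidates from $U$ on $\Xh_1^n$, with $A>\hat K_1$. As noted above, $U(\bxh)\ge 2^{-n(1+\epsilon_n)\log\hat K_1}$, so $\mu=U$ meets the hypothesis of Lemma~\ref{lem:universal-ach} with $B=\hat K_1^{1+\epsilon_n}$. The encoder transmits the index of the first candidate $\bxh_1$ with $\hat P_{\bx\bxh_1}=W_1^\ell$, i.e.\ the target is $\mathcal E(\bx)=\mathcal T(W_1^\ell|\bx)$. This target is nonempty because $W^\ell\in\mathcal W$, and it lies inside the single type class $\mathcal T_n(Q_1^\ell)$. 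Hence the set-level Ziv bound (\ref{eq:zivset}) applies:
\[
U[\mathcal T(W_1^\ell|\bx)]\ge U_{Q_1^\ell}[\mathcal T(W_1^\ell|\bx)]\,2^{-n\Delta_n(\ell)}=2^{-nR_1^*(W^\ell)-n\Delta_n(\ell)},
\]
where the equality is the definition of $R_1^*$ in Theorem~\ref{thm:conv}. Lemma~\ref{lem:universal-ach} then gives $L_1(\bx)\le nR_1^*(W^\ell)+n\Delta_n(\ell)+(2+\epsilon)\log n+c$, which is the first claim.

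\emph{Stage 2.} Both encoder and decoder know $\bxh_1$, so they draw a second codebook of $A^n$ candidates independently from $V(\cdot|\bxh_1)$. The encoder sends the index of the first $\bxh_2$ completing the full triple to type $W^\ell$, i.e.\ the target is $\mathcal T(W^\ell|\bx,\bxh_1)$. Stage~1 guarantees $\bxh_1\in\mathcal T(W_1^\ell|\bx)$ exactly. This guarantee is the point of building the joint type into Stage~1's criterion, and it is what removes the Jensen obstruction of Section~\ref{sec:why-hard}. With it, Lemma~\ref{lem:uniform-cov} applies to the actual $\bxh_1$:
\[
V[\mathcal T(W^\ell|\bx,\bxh_1)|\bxh_1]\ge 2^{-n[R^*(W^\ell)-R_1^*(W^\ell)]-n\Delta_n'(\ell)+o(n)}.
\]
The success event also forces $d_1$ and $d_2$ to meet their budgets, because distortion depends only on the joint type and $W^\ell\in\mathcal W$. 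A second application of Lemma~\ref{lem:universal-ach} then bounds $L_2(\bx)$, and adding it to the Stage-1 bound gives the sum-rate claim.

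\emph{Main obstacle.} The delicate step is checking that Lemma~\ref{lem:universal-ach} holds for the conditional measure $V(\cdot|\bxh_1)$. This needs two things.
\begin{itemize}
\item A uniform floor $V(\bxh_2|\bxh_1)\ge B^{-n}$. I would get it from a worst-case bound $LZ(\bxh_2|\bxh_1)\le n(1+\epsilon_n)\log\hat K_2$, the conditional analogue of the unconditional worst case, together with a Kraft normalization $\sum_{\bxh_2}2^{-LZ(\bxh_2|\bxh_1)}\le 1$, which follows from the conditional LZ code being uniquely decodable given $\bxh_1$.
\item A single Stage-2 codebook must serve every $\bxh_1$ simultaneously. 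I would handle this by letting the Stage-2 codebook be generated by shared randomness indexed by $\bxh_1$, i.e.\ a family of codebooks. Lemma~\ref{lem:universal-ach}'s guarantee holds for every $\bx$ and every target, so the existence argument can be run jointly over the at most $A^n$ possible $\bxh_1$ values. This costs only an extra $O(\log n)$ or constant term, because the union bound inside that lemma's proof is doubly exponential.
\end{itemize}
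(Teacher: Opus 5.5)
Your proposal is correct and follows the paper's proof essentially step for step. Stage~1 searches $U$-drawn candidates for $\mathcal T(W_1^\ell|\bx)$ and combines the set-level Ziv bound (\ref{eq:zivset}) with Lemma~\ref{lem:universal-ach}; Stage~2 searches $V(\cdot|\bxh_1)$-drawn candidates for $\mathcal T(W^\ell|\bx,\bxh_1)$, invoking Lemma~\ref{lem:uniform-cov} at the guaranteed $\bxh_1$. Your extra checks fill in details the paper passes over silently when it applies Lemma~\ref{lem:universal-ach} ``with the conditional measure in place of $U$'': the $B^{-n}$ floor for $V(\cdot|\bxh_1)$ via $LZ(\bxh_2|\bxh_1)\le n(1+\epsilon_n)\log\hat K_2$ plus Kraft, and a separate codebook for each $\bxh_1$. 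The latter needs no joint union bound, since the lemma can be applied to each fixed $\bxh_1$ separately with the same $B$.
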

\begin{proof}
\emph{Stage 1.} Draw $\bxh_1^{(1)},\dots,\bxh_1^{(A^n)}$ i.i.d.\ from
$U$ --- the full, unrestricted, $\bx$-independent measure, $A>\hat K_1$.
Search for the first index with $\bxh_1^{(\cdot)} \in
\mathcal T(W_1^\ell|\bx)$. By
Theorem~\ref{thm:conv}'s own definition of $R_1^*(W^\ell)$,
$U_{Q_1^\ell}[\mathcal T(W_1^\ell|\bx)]=2^{-nR_1^*(W^\ell)}$
directly. By
Lemma~\ref{lem:F0}, $U[\mathcal T(W_1^\ell|\bx)] \ge
U_{Q_1^\ell}[\mathcal T(W_1^\ell|\bx)]\cdot2^{-n\Delta_n(\ell)}
=2^{-nR_1^*(W^\ell)-n\Delta_n(\ell)}$. By
Lemma~\ref{lem:universal-ach} with $T=\mathcal T(W_1^\ell|\bx)$, $L_1(\bx)\le
R_1^*(W^\ell)n+n\Delta_n(\ell)+O(\log n)$.

\emph{Stage 2.} Given this $\bxh_1\in\mathcal T(W_1^\ell|\bx)$ --- guaranteed, not random --- draw a satellite
codebook i.i.d.\ from Ziv's conditional construction
$V(\cdot|\bxh_1)$. Search for the first index
landing in $\mathcal T(W^\ell|\bx,\bxh_1)$ --- a valid target for the
$D_2$ constraint, since $\mathcal T(W^\ell|\bx,\bxh_1)\subseteq \mathcal S(\bx,D_2)$
(the same type-determined-distortion property, applied to $W^\ell$'s
own $(X,\hat X_2)$ sub-marginal). By
Lemma~\ref{lem:uniform-cov} ---
applied here at just this one, Stage-1-selected $\bxh_1$;
Section~\ref{sec:tf-converse} below exploits its full generality ---
$V[\mathcal T(W^\ell|\bx,\bxh_1)|\bxh_1]\ge2^{-n[R^*(W^\ell)-R_1^*(W^\ell)]-n\Delta_n'(\ell)+o(n)}$.
By Lemma~\ref{lem:universal-ach} with $\mathcal E(\bx)=\mathcal T(W^\ell|\bx,\bxh_1)$ and
the conditional measure $V(\cdot|\bxh_1)$ in place of $U$, $L_2(\bx)\le
(R^*(W^\ell)-R_1^*(W^\ell))n+n\Delta_n'(\ell)+o(n)+O(\log n)$.

Combining the two stages gives the claim.
\end{proof}

Dividing by $n$ and applying Proposition~\ref{prop:order} (so
$\Delta_n(\ell),\Delta_n'(\ell)\to0$ as $n\to\infty$, for
$\ell=\ell(n)$ as specified there) gives $\rho_1(\bx)\to
R_1^*(W^\ell)$, $\rho_1(\bx)+\rho_2(\bx)\to R^*(W^\ell)$, matching
Theorem~\ref{thm:conv} exactly at this $W^\ell$ --- for every
$\bx\in\mathcal T_n(P^\ell)$, not merely most, since neither
ingredient the proof combines (Lemma~\ref{lem:universal-ach},
Lemma~\ref{lem:dc}'s exact type symmetry,
Lemma~\ref{lem:uniform-cov}'s coverage of every candidate) admits an
exception.

Searching $\mathcal T(W^\ell|\bx,\bxh_1)$ rather than
$\mathcal S(\bx,D_2)$ directly is a real restriction in general: direct
search achieves at least as good a rate (Section~\ref{sec:motivation}'s
fourth point), and $\mathcal S(\bx,D_2)$, unlike
$\mathcal T(W^\ell|\bx,\bxh_1)$, is not tied to any type, so
Lemma~\ref{lem:condF0} has nothing to say about it and no matching
lower bound is provable this way. But the gap does not actually
manifest here, since Stage~2 is the last stage: waiting for the
first $\bxh_2$ landing directly in $\mathcal S(\bx,D_2)$ is
exponentially equivalent to waiting for the first one jointly
typical, given $\bx,\bxh_1$, with the dominant type within that
ball --- the type that also minimizes the incremental rate
$\hat I_{W^\ell}(X;\hat X_2|\hat X_1)$, manifesting part of the
Pareto-optimization of the rate pair. Direct search is a legitimate,
rate-equivalent alternative here; the type-match route is used
regardless, being the one a matching converse can be proved for.
With a further stage to follow, this equivalence would not hold:
$\bxh_2$ would also need to keep that stage's own rate low, which
the dominant type within $\mathcal S(\bx,D_2)$ need not do --- the
same Jensen obstruction Section~\ref{sec:types} identifies for
Stage~1.

\begin{proposition}[Parameter ordering] \label{prop:order}
Let $K_*=\max(|\mathcal X|,\hat K_1,\hat K_2)$,
$\ell(n)=\lfloor\frac{1}{4}\log n/\log K_*\rfloor$. Then, as
$n\to\infty$, the error terms of Theorems~\ref{thm:conv}--\ref{thm:ach},
Lemma~\ref{lem:F0}, and eq.~(\ref{eq:bprime}) (and their
doubled-alphabet forms) all $\to0$, given
$\Delta_n'(\ell)$ has the same qualitative two-term shape as
$\Delta_n(\ell)$ (Lemma~\ref{lem:condF0}'s status).
\end{proposition}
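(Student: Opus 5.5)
With $\ell(n)=\lfloor\frac14\log n/\log K_*\rfloor$ the super-alphabets $\mathcal X^\ell$, $\Xh_1^\ell$, $\Xh_2^\ell$ and their product all have size at most $K_*^{3\ell}\le n^{3/4}$, while $m=n/\ell=\Theta(n/\log n)$. The plan is to check each error term against these two sizes in turn.

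\emph{Type-counting errors.} The $O(\log m)$ errors in the type-probability estimates of Theorem~\ref{thm:ach} and in the count (\ref{eq:bprime}) come from the polynomial factor $(m+1)^{|\text{alphabet}|}$. So they are really $O(|\text{alphabet}|\cdot\log m)$, with the alphabet being the relevant super-alphabet. Since that alphabet has size at most $n^{3/4}$, these errors are $O(n^{3/4}\log n)=o(n)$, and after division by $n$ they vanish.

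\emph{Size of $\mathcal W$.} The same estimate bounds $|\mathcal W|$, but only as $n^{O(n^{3/4})}$, not as $n^c$. I would handle this in Theorem~\ref{thm:conv-region} by letting the union-bound slack scale with $\log|\mathcal W|=O(n^{3/4}\log n)$. This is still $o(n)$ per coordinate, so the normalized rates are unaffected. I expect this bookkeeping, making the constant $c$ of Theorem~\ref{thm:conv-region} an $n$-dependent $o(n/\log n)$ quantity, to be the main obstacle. The alternative is to tighten $\ell$ to $\epsilon\log n$ with a smaller constant so that polynomial counts are recovered.

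\emph{The LZ terms.} For Lemma~\ref{lem:F0}, I would recall the standard form of Ziv's inequality: $\Delta_n(\ell)=\frac{1}{\ell}\cdot O(1)+O\big(\frac{\ell\log K_*+\log\log n}{\log n}\big)$. The first term comes from the $\ell$-block granularity, and the second from the phrase-count bound $c(\bxh)\le n\log\hat K/((1-\epsilon_n)\log n)$. With $\ell\to\infty$ and $\ell\log K_*\le\frac14\log n$, the first term tends to $0$. The second term I would handle as follows. The Ziv argument gives a coefficient of the form $\ell^{-1}\cdot(\text{entropy of phrase lengths})$ plus $c\log c$ versus $n\hat H/\ell$ slack of order $K_*^{\ell}\log n/n\cdot\ell$. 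This is $O(n^{-3/4}\log^2 n)\to0$. Hence $\Delta_n(\ell)\to0$.

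By the stated hypothesis, $\Delta_n'(\ell)$ has the same two-term shape with the product alphabet, whose size is at most $K_*^{2\ell}\le n^{1/2}$. The same computation therefore applies verbatim. Finally, the $(2+\epsilon)\log n$ terms of Lemma~\ref{lem:universal-ach} and the $\epsilon\log n$ of Theorem~\ref{thm:conv} are $O(\log n)$ and are trivially $o(n)$. Collecting everything, all normalized error terms tend to $0$.
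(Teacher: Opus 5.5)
\textbf{The gap is in your paragraph on the LZ terms.} The form you quote, $\Delta_n(\ell)=O(1/\ell)+O\big((\ell\log K_*+\log\log n)/\log n\big)$, contains the term $O(\ell\log K_*/\log n)$. At the prescribed $\ell(n)=\lfloor\frac14\log n/\log K_*\rfloor$, the ratio $\ell\log K_*/\log n$ tends to $\frac14$, so this term stays bounded away from zero. Your next sentences replace it by a ``slack of order $K_*^\ell\ell\log n/n$'', but you derive this from nothing, and it contradicts the form you just wrote. So $\Delta_n(\ell(n))\to0$ is not established, and neither is $\Delta_n'(\ell(n))\to0$, which you get through the hypothesis.

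Nor can you expect to remove the term. It is the price LZ78 pays for the context it discards at each phrase boundary; in the finite-state form of the bound it appears as $\frac cn\log(\#\mbox{states})$ with $\#\mbox{states}\approx K_*^\ell$. Once $\ell=\Theta(\log n)$, a phrase spans only $O(1)$ blocks, so this loss is a constant fraction of the rate. A heuristic example: take a binary sequence built from $n^{1/8}$ random $\ell$-blocks with $\ell=\frac14\log_2 n$. It has $\hat H(Q^\ell)/\ell\approx\frac12$. Yet LZ78's phrases cover only about seven blocks each and pay twice for every block straddling a phrase boundary, giving a rate near $\frac47$.

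The paper's proof never meets this term, because it treats $\Delta_n(\ell)$ as a $1/\ell$ piece plus alphabet-size pieces of order $K_*^{2\ell}\log n/n$. With that shape taken as given, its two observations finish the job: $1/\ell(n)=\Theta(1/\log n)$ and $K_*^{2\ell}\le n^{1/2}$. Invoking that premise is what your argument would have to do. If you keep the $\ell/\log n$ term instead, the conclusion needs $\ell(n)\to\infty$ with $\ell(n)=o(\log n)$, not the stated $\ell(n)$.

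The rest of your proposal is sound, and in two respects more careful than the paper.
\begin{itemize}
\item You bound the full triple super-alphabet by $K_*^{3\ell}\le n^{3/4}$, whereas the paper bounds only the reconstruction alphabets, by $n^{1/2}$. This still makes every type-counting error $O(n^{3/4}\log n)=o(n)$.
\item You are right that $|\mathcal W|\le n^c$ is false for $\ell=\Theta(\log n)$, a point the paper's proof does not touch. Your fix, running Theorem~\ref{thm:conv} at $\epsilon+\log|\mathcal W|/\log n$, works. The exceptional fraction stays at $2n^{-\epsilon}$ while the slack grows only to $O(n^{3/4}\log n)=o(n)$, which repairs Theorem~\ref{thm:conv-region} in normalized form.
\end{itemize}
Your fallback does not work, though. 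For any unbounded $\ell$, including $\ell=\epsilon\log n$, the super-alphabet size grows without bound and the number of types is super-polynomial in $n$. Polynomial counts return only for bounded $\ell$.
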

\begin{proof}
$1/\ell(n)=\Theta(1/\log n)\to0$; $\hat K_1^{\ell(n)},\hat K_2^{\ell(n)},
\hat K_1^{\ell(n)}\hat K_2^{\ell(n)}\le
K_*^{2\ell(n)}=n^{1/2+o(1)}$, so each alphabet-size error term is
$O(n^{-1/2+o(1)}\log n)\to0$.
\end{proof}

\subsection{A type-free alternative for Stage 1}
\label{sec:typefree}

Theorem~\ref{thm:final}'s Stage~1 search criterion, joint typicality
with a chosen type $W^\ell$, is defined
via that type directly. This is not very
satisfying: a construction whose entire point is to move past types
(Section~\ref{sec:motivation}) still routes its lookahead through one.
This subsection asks whether Stage~1 can look ahead to Stage~2
without types at all, and shows that the answer is yes.

For any $\hat{\bx}_1$, define --- the sequence-argument analogue of the
incremental Stage-2 rate $R^*(W^\ell)-R_1^*(W^\ell)$ a type induces ---
\begin{equation}
\rho_2^*(\hat{\bx}_1)\dfn-\frac1n\log V[\mathcal S(\bx,D_2)|\hat{\bx}_1],
\end{equation}
the actual conditional-LZ rate $\hat{\bx}_1$ induces for Stage~2 ---
built only from $U,V$, with no type anywhere. For any $\tau\ge0$, let
$\mathcal G_\tau \dfn \mathcal S(\bx,D_1)\cap\{\hat{\bx}_1 :
\rho_2^*(\hat{\bx}_1)\le\tau\}$.

\emph{Type-free schemes.} Call a two-stage scheme \emph{type-free} if
its search phases include no joint typicality criterion --- unlike
Theorem~\ref{thm:final}'s own Stage-1 search, which checks membership
in $\mathcal T(W_1^\ell|\bx)$ directly.

\begin{proposition}[Type-free achievability] \label{prop:typefree}
For every $\tau\ge0$ and $\epsilon>0$, and all sufficiently large $n$, there is a type-free,
two-stage LZ78-based scheme achieving, for every $\bx$,
\begin{equation}
L_1(\bx) \le -\log U[\mathcal G_\tau] + (2+\epsilon)\log n + c, \qquad
L_1(\bx)+L_2(\bx) \le -\log U[\mathcal G_\tau] + n\tau + (4+\epsilon)\log n + c,
\end{equation}
$c=c(\epsilon)$.
\end{proposition}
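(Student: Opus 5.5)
The plan is to run Lemma~\ref{lem:universal-ach} twice, once per stage, with targets that never mention a joint type. Stage~1 uses the target $\mathcal G_\tau$; Stage~2 uses the plain distortion ball $\mathcal S(\bx,D_2)$.

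\emph{Stage 1.} Draw a codebook of $A^n$ candidates i.i.d.\ from $U$, with $A>\hat K_1$. The encoder sends the index of the first candidate $\bxh_1$ that lies in $\mathcal G_\tau$, i.e., it satisfies $d_1(\bx,\bxh_1)\le nD_1$ and $\rho_2^*(\bxh_1)\le\tau$.

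The test $\rho_2^*(\bxh_1)\le\tau$ only evaluates $V[\mathcal S(\bx,D_2)|\bxh_1]$, a probability under the fixed conditional LZ measure of a distortion ball. No joint typicality enters, so the scheme is type-free in the sense defined above. As noted before Theorem~\ref{thm:final}, $\mu=U$ meets the hypothesis of Lemma~\ref{lem:universal-ach} with $B=\hat K_1^{1+\epsilon_n}$. Applying the lemma with $\mathcal E(\bx)=\mathcal G_\tau$ (which depends on $\bx$, as the lemma allows) gives $L_1(\bx)\le-\log U[\mathcal G_\tau]+(2+\epsilon/2)\log n+c$. If $U[\mathcal G_\tau]=0$, the bound is vacuous and nothing needs proving.

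\emph{Stage 2.} Given the selected $\bxh_1$, which both sides know, draw a satellite codebook of $A_2^n$ candidates i.i.d.\ from $V(\cdot|\bxh_1)$, with $A_2>\hat K_2$. The encoder sends the index of the first candidate in $\mathcal S(\bx,D_2)$. By construction $V[\mathcal S(\bx,D_2)|\bxh_1]=2^{-n\rho_2^*(\bxh_1)}\ge2^{-n\tau}$. Lemma~\ref{lem:universal-ach} with $\mu=V(\cdot|\bxh_1)$ then gives $L_2(\bx)\le n\tau+(2+\epsilon/2)\log n+c$. Adding the two stage bounds gives the claimed total.

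The main obstacle is checking the hypotheses of Lemma~\ref{lem:universal-ach} for the Stage~2 measure, uniformly over $\bxh_1$.

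\begin{itemize}
\item \emph{Lower bound on $V$.} I need $V(\bxh_2|\bxh_1)\ge B_2^{-n}$ with $B_2$ independent of $\bxh_1$. I would get this from $V(\bxh_2|\bxh_1)\ge2^{-LZ(\bxh_2|\bxh_1)}$, which holds by Kraft since the conditional LZ code is uniquely decodable given $\bxh_1$, combined with the worst-case bound $LZ(\bxh_2|\bxh_1)\le n(1+\epsilon_n)\log\hat K_2$. That bound holds because each $c_l\log c_l$ sum is at most the unconditional worst case.
\item \emph{One codebook for all $\bxh_1$.} The codebook must work simultaneously for every $\bxh_1$, since one satellite codebook is drawn per $\bxh_1$. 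Lemma~\ref{lem:universal-ach} already gives existence for every $\bx$ and every target. I would apply it separately for each of the at most $\hat K_1^n$ possible $\bxh_1$, or equivalently realize the codebooks with shared randomness.
\item \emph{Combining constants.} Take $\epsilon/2$ in each stage and absorb the two constants into $c=c(\epsilon)$.
\end{itemize}
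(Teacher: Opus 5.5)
Your proposal is correct and is the paper's proof: Lemma~\ref{lem:universal-ach} is applied under $U$ with target $\mathcal G_\tau$, then under $V(\cdot|\bxh_1)$ with target $\mathcal S(\bx,D_2)$, and membership of the selected $\bxh_1$ in $\mathcal G_\tau$ guarantees $V[\mathcal S(\bx,D_2)|\bxh_1]\ge 2^{-n\tau}$. Your additional checks (a bound $V(\bxh_2|\bxh_1)\ge B_2^{-n}$ uniform in $\bxh_1$, one satellite codebook per $\bxh_1$, and $\epsilon/2$ per stage) fill in what the paper compresses into ``applies again, conditionally''; your ``unconditional worst case'' justification for the uniform bound is vague, and it is cleanest to note that the $c_l$ distinct $\bxh_2$-phrases paired with $\bxh_1(l)$ all have length $|\bxh_1(l)|$, so $\sum_l c_l\log c_l\le\log\hat K_2\sum_l c_l|\bxh_1(l)|\le n\log\hat K_2$.
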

\begin{proof}
Stage~1 draws from the full, unrestricted $U$ and searches for
$\mathcal G_\tau$-membership; Lemma~\ref{lem:universal-ach} (stated
for any target set) applies directly, giving
$L_1(\bx)\le-\log U[\mathcal G_\tau]+(2+\epsilon)\log n+c$. The
resulting $\hat{\bx}_1\in \mathcal G_\tau$ satisfies
$\rho_2^*(\hat{\bx}_1)\le\tau$ by construction --- guaranteed, not
probabilistically. Stage~2 draws from the full $V(\cdot|\hat{\bx}_1)$
and searches for $\mathcal S(\bx,D_2)$ directly;
Lemma~\ref{lem:universal-ach} applies again, conditionally, giving
$L_2(\bx)\le n\rho_2^*(\hat{\bx}_1)+(2+\epsilon)\log n+c\le
n\tau+(2+\epsilon)\log n+c$, so
$L_1(\bx)+L_2(\bx)\le-\log U[\mathcal G_\tau]+n\tau+2(2+\epsilon)\log
n+2c$, which is the claimed bound after relabeling $2\epsilon,2c$ as
$\epsilon,c$.
\end{proof}

Dividing by $n$ gives
\begin{equation}
\rho_1(\bx) \le -\frac1n\log U[\mathcal G_\tau] + o(1), \qquad
\rho_1(\bx)+\rho_2(\bx) \le -\frac1n\log U[\mathcal G_\tau] + \tau + o(1),
\end{equation}
the form used below.

Ranging $\tau$ traces a family of achievable points, exactly
paralleling the type-indexed family $\mathcal W$ of
Section~\ref{sec:types}, but continuously and without reference to
any type.

\subsection{The type-free scheme matches the converse}
\label{sec:tf-converse}

Theorem~\ref{thm:conv-region} (Section~\ref{sec:types}) already
covers this scheme, like any other: it holds for every code,
whatever type its output happens to realize. What remains is only
the achievability side: a demonstration that
Proposition~\ref{prop:typefree}'s type-free scheme's own achieved
rate actually reaches, rather than merely respects, that converse.
The next theorem supplies this, by
reusing Lemma~\ref{lem:uniform-cov} (Section~\ref{sec:lz}) ---
used inside Theorem~\ref{thm:final} at only the one, Stage-1-selected
$\hat{\bx}_1$, it already holds for \emph{every}
$\hat{\bx}_1\in\mathcal T(W_1^\ell|\bx)$ simultaneously; that full
generality, unused until now, is exactly what is needed here.

\begin{theorem}[The type-free scheme reaches every point of $\mathcal
R(\bx)$] \label{thm:tf-match}
Recall Proposition~\ref{prop:typefree}'s construction: Stage~1 draws
from $U$ and searches for $\mathcal G_\tau$-membership --- jointly,
$D_1$-compatible \emph{and} with conditional-LZ rate at most $\tau$
--- and Stage~2 draws from $V(\cdot|\bxh_1)$ and searches directly
for $D_2$. For every $W^\ell\in\mathcal W$, running this construction
at
\begin{equation}
\tau(W^\ell)\dfn R^*(W^\ell)-R_1^*(W^\ell)+\Delta_n'(\ell)
\end{equation}
achieves
\begin{equation}
\rho_1(\bx)\;\le\;R_1^*(W^\ell)+\Delta_n(\ell)+o(1),\qquad
\rho_1(\bx)+\rho_2(\bx)\;\le\;R^*(W^\ell)+\Delta_n(\ell)+\Delta_n'(\ell)+o(1),
\end{equation}
converging to $(R_1^*(W^\ell),R^*(W^\ell))$ as $n\to\infty$. Ranging
$W^\ell$ over $\mathcal W$, the type-free scheme's own achievable
curve $\{(\rho_1(\tau),\rho_1(\tau)+\rho_2(\tau)):\tau\ge0\}$
therefore contains every corner of $\mathcal R(\bx)$ --- in
particular every point of its
Pareto frontier.
\end{theorem}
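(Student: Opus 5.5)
The plan is to feed Proposition~\ref{prop:typefree} a lower bound on $U[\mathcal G_\tau]$ at $\tau=\tau(W^\ell)$. Everything reduces to showing the set inclusion
\[
\mathcal T(W_1^\ell|\bx)\subseteq\mathcal G_{\tau(W^\ell)},
\]
because then monotonicity of $U$ together with the Stage-1 computation in the proof of Theorem~\ref{thm:final} (namely (\ref{eq:zivset}) applied to $\calS=\mathcal T(W_1^\ell|\bx)\subseteq\mathcal T_n(Q_1^\ell)$) gives
\[
U[\mathcal G_{\tau(W^\ell)}]\ge U[\mathcal T(W_1^\ell|\bx)]\ge2^{-nR_1^*(W^\ell)-n\Delta_n(\ell)}.
\]

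To prove the inclusion, I would fix an arbitrary $\bxh_1\in\mathcal T(W_1^\ell|\bx)$ and check the two defining conditions of $\mathcal G_\tau$.

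\emph{First, $D_1$-compatibility.} Since $W^\ell\in\mathcal W$ meets the $D_1$ budget and $d_1$ depends only on the first-order joint empirical distribution, which is determined by the block type $W_1^\ell$, we get $\bxh_1\in\mathcal S(\bx,D_1)$.

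\emph{Second, the conditional-LZ rate.} Here I would invoke Lemma~\ref{lem:uniform-cov} at this $\bxh_1$. Its generality over every $\bxh_1$ of the right type is exactly what is needed, since membership must hold for every such candidate and not just a selected one. Combining it with the inclusion $\mathcal T(W^\ell|\bx,\bxh_1)\subseteq\mathcal S(\bx,D_2)$, used in the proof of Theorem~\ref{thm:final}, gives
\[
V[\mathcal S(\bx,D_2)|\bxh_1]\ge2^{-n[R^*-R_1^*]-n\Delta_n'(\ell)+o(n)},
\]
that is, $\rho_2^*(\bxh_1)\le R^*(W^\ell)-R_1^*(W^\ell)+\Delta_n'(\ell)+o(1)$.

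The main obstacle is this $o(1)$ term. It comes from the $O(\log m)$ in (\ref{eq:bprime}) and is not literally absorbed into $\tau(W^\ell)$ as stated. I would handle it by redefining the threshold as $\tau(W^\ell)+\delta_n$, with $\delta_n=O(\log n/n)$ taken as the explicit method-of-types slack. Alternatively, I would read $\Delta_n'(\ell)$ as already including that slack, which matches the paper's convention of folding the $O(\log n)$ type-size error into $\Delta_n(\ell)$, as done for (\ref{eq:zivset}). Either way the extra cost in the sum rate is $n\delta_n=O(\log n)$, which is harmless.

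Plugging these into the per-$n$ bounds of Proposition~\ref{prop:typefree} yields
\[
\rho_1\le R_1^*+\Delta_n(\ell)+O(\log n/n)
\]
and
\[
\rho_1+\rho_2\le R_1^*+\Delta_n(\ell)+R^*-R_1^*+\Delta_n'(\ell)+o(1),
\]
which are the claimed bounds. Proposition~\ref{prop:order} then sends $\Delta_n,\Delta_n'\to0$, giving convergence to $(R_1^*(W^\ell),R^*(W^\ell))$.

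For the final sentence of the theorem, I would note that every corner of $\mathcal R(\bx)$ is a pair $(R_1^*(W^\ell),R^*(W^\ell))$ for some $W^\ell\in\mathcal W$. By the finiteness remark in Section~\ref{sec:ach-region}, each Pareto frontier point is dominated by, or equal to, such a corner. Running the scheme at the corresponding $\tau(W^\ell)$ therefore reaches it, and Theorem~\ref{thm:conv-region} certifies that nothing better is possible.
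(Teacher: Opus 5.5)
Your proposal is correct and follows the paper's own proof essentially step for step: you prove the inclusion $\mathcal T(W_1^\ell|\bx)\subseteq\mathcal G_{\tau(W^\ell)}$ via type-determined $D_1$-distortion together with Lemma~\ref{lem:uniform-cov} and $\mathcal T(W^\ell|\bx,\bxh_1)\subseteq\mathcal S(\bx,D_2)$, then lower-bound $U[\mathcal G_{\tau(W^\ell)}]$ by (\ref{eq:zivset}), then apply Proposition~\ref{prop:typefree}. Your explicit handling of the residual $o(1)$ (enlarging the threshold, or folding it into $\Delta_n'(\ell)$) is what the paper's brief remark about absorbing it into $\tau(W^\ell)$ amounts to, since monotonicity of $\mathcal G_\tau$ alone would not give the inclusion at $\tau(W^\ell)$ itself; one small correction is that with $\ell(n)$ as in Proposition~\ref{prop:order}, the method-of-types slack over the $\ell$-block superalphabet is $o(1)$ per symbol rather than $O(\log n/n)$, which is still harmless because the theorem only claims $o(1)$.
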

\begin{proof}
\emph{Step 1: the entire type-match set lies in
$\mathcal G_{\tau(W^\ell)}$.} Unpacking $\mathcal G_\tau$'s own
two-part definition, this means showing, for every
$\hat{\bx}_1\in\mathcal T(W_1^\ell|\bx)$: (i)
$\hat{\bx}_1\in\mathcal S(\bx,D_1)$, and (ii)
$\rho_2^*(\hat{\bx}_1)\le\tau(W^\ell)$.

(i) holds directly, by the type-determined-distortion property:
$\mathcal T(W_1^\ell|\bx)\subseteq \mathcal S(\bx,D_1)$.

For (ii): by Lemma~\ref{lem:uniform-cov}, every such $\hat{\bx}_1$
satisfies
\begin{equation}
V[\mathcal T(W^\ell|\bx,\hat{\bx}_1)|\hat{\bx}_1]\;\ge\;2^{-n[R^*(W^\ell)-R_1^*(W^\ell)]-n\Delta_n'(\ell)+o(n)}.
\end{equation}
Since $\mathcal T(W^\ell|\bx,\hat{\bx}_1)\subseteq \mathcal S(\bx,D_2)$
(the same type-determined-distortion property, now applied to
$W^\ell$'s own $(X,\hat X_2)$-marginal), $V[\mathcal S(\bx,D_2)|\hat{\bx}_1]$ is at least as large:
\begin{equation}
V[\mathcal S(\bx,D_2)|\hat{\bx}_1]\ge V[\mathcal
T(W^\ell|\bx,\hat{\bx}_1)|\hat{\bx}_1
]\ge 2^{-n[R^*(W^\ell)-R_1^*(W^\ell)]-n\Delta_n'(\ell)+o(n)}.
\end{equation}
Taking $-\frac1n\log$ of both sides (reversing the inequality) gives
exactly $\rho_2^*(\hat{\bx}_1)\le
R^*(W^\ell)-R_1^*(W^\ell)+\Delta_n'(\ell)+o(1)=\tau(W^\ell)+o(1)$.

Both (i) and (ii) hold for \emph{every} $\hat{\bx}_1\in\mathcal
T(W_1^\ell|\bx)$ simultaneously --- a uniform bound, not a
majority statement, since Lemma~\ref{lem:uniform-cov} itself already
holds uniformly. So $\mathcal T(W_1^\ell|\bx)\subseteq
\mathcal G_{\tau(W^\ell)+o(1)}$; absorbing this
vanishing term into $\tau(W^\ell)$'s own definition is harmless
(monotonicity of $\mathcal G_\tau$ in $\tau$), so $\mathcal
T(W_1^\ell|\bx)\subseteq \mathcal G_{\tau(W^\ell)}$ exactly, as used below.

\emph{Step 2: a lower bound on $U[\mathcal G_{\tau(W^\ell)}]$.} By
Lemma~\ref{lem:F0} applied to $\mathcal T(W_1^\ell|\bx)\subseteq
\mathcal T_n(Q_1^\ell)$, $U[\mathcal T(W_1^\ell|\bx)]\ge U_{Q_1^\ell}[\mathcal
T(W_1^\ell|\bx)]\cdot2^{-n\Delta_n(\ell)}=2^{-nR_1^*(W^\ell)-n\Delta_n(\ell)}$,
using $R_1^*(W^\ell)$'s own definition (Theorem~\ref{thm:conv}). By
Step~1 and monotonicity of $U[\cdot]$,
$U[\mathcal G_{\tau(W^\ell)}]\ge2^{-nR_1^*(W^\ell)-n\Delta_n(\ell)}$.

\emph{Step 3: convert to a rate bound.} Proposition~\ref{prop:typefree},
run at $\tau=\tau(W^\ell)$, gives
$\rho_1(\bx)\le-\frac1n\log U[\mathcal G_{\tau(W^\ell)}]+o(1)\le
R_1^*(W^\ell)+\Delta_n(\ell)+o(1)$, and
$\rho_1(\bx)+\rho_2(\bx)\le-\frac1n\log
U[\mathcal G_{\tau(W^\ell)}]+\tau(W^\ell)+o(1)\le
R_1^*(W^\ell)+\Delta_n(\ell)+[R^*(W^\ell)-R_1^*(W^\ell)+\Delta_n'(\ell)]+o(1)
=R^*(W^\ell)+\Delta_n(\ell)+\Delta_n'(\ell)+o(1)$, as claimed.
\end{proof}

Theorem~\ref{thm:tf-match}, combined with Theorem~\ref{thm:conv-region}
(Section~\ref{sec:types}, applying here exactly as it does to any
code), gives exact equality: the type-free scheme's own achievable
region --- the union, over $\tau\ge0$, of the quadrant above each
achieved $(\rho_1(\tau),\rho_1(\tau)+\rho_2(\tau))$ --- coincides
with $\mathcal R(\bx)$, not merely touches its corners from inside.
Neither direction needs to know, or control, which type $\mathcal
G_\tau$'s own random search actually realizes.

\emph{Search complexity: a real trade-off, not a wash.} This equality
is about achieved rates, not about what it costs to search for them,
and the two constructions are not equivalent on that second axis.
Theorem~\ref{thm:final}'s own, type-committed Stage-1 criterion --- joint
typicality of $(\bx,\hat{\bx}_1)$ with a type $W_1^\ell$ fixed in
advance --- is checkable from $\bx$ and a single candidate $\hat{\bx}_1$
alone, with no reference to Stage~2's own search space at all;
Stage~2 begins its own, separate search only once Stage~1 has
already succeeded, and only once. Proposition~\ref{prop:typefree}'s
type-free criterion, $\mathcal G_\tau$-membership, is not so cheap to
check: verifying it for a single candidate $\hat{\bx}_1$ requires
evaluating $\rho_2^*(\hat{\bx}_1)$, itself the $V$-measure of the entire
set $\mathcal S(\bx,D_2)$ conditioned on that one $\hat{\bx}_1$ --- a
computation reaching into Stage~2's own candidate space, repeated for
every $\hat{\bx}_1$ Stage~1 tries along the way, not merely the one it
eventually keeps. What the type-free construction buys for this
added cost is real, not merely conceptual: it needs no target type
$W^\ell$ fixed in advance, and so no prior solution of the type-side
optimization Section~\ref{sec:ach-region}'s own $\epsilon$-constraint
recipe requires just to decide which $W^\ell$ to aim for --- reaching
every point of $\mathcal R(\bx)$ off a single scalar dial $\tau$
instead; and, as the abstract states at the outset, no joint type of
$\ell$-vectors is ever constructed or checked by its own search
criterion, only individual-sequence quantities throughout.

\subsection{Domination over the finite-state-encoder route of
\cite{merhav244}} \label{sec:domination}

Section~\ref{sec:relation244} promised the two-stage extension of
\cite{merhav233}'s own single-stage domination inequality (its
eq.~(51)); this subsection proves it, now that $\mathcal G_\tau$, $U$, $V$,
and Proposition~\ref{prop:typefree} are in place. Recall
$\mathcal S_2(\bx,D_1,D_2)=\mathcal S(\bx,D_1)\times \mathcal S(\bx,D_2)$
(Section~\ref{sec:setup}) --- the admissible set exactly as
\cite{merhav244} defines it, since $d_1,d_2$ each depend on only their
own reconstruction, so the two constraints are independent.

\emph{The finite-state-encoder route's own achievable region.} For a genuine
$q$-state finite-state encoder --- $q$ bounded, not growing with $n$
--- \cite{merhav244}'s own scheme partitions each reconstruction into
non-overlapping blocks and restarts the LZ78 mechanism independently
within each block, trading some rate for a bounded state count
(unrestricted LZ78, run continuously over the whole sequence, needs a
number of states not small compared to $n$). Write $\mathcal
R_q(\bx)$ for the resulting true $q$-state achievable region.
Theorem~1 of \cite{merhav244} shows this is contained in the
region obtained from unrestricted LZ78 applied once, continuously,
over the whole sequence:
\begin{eqnarray}
\mathcal R_q(\bx) &\subseteq& \mathcal R^o(\bx) \dfn
\bigcup_{(\hat{\bx}_1,\hat{\bx}_2)\in\mathcal S_2(\bx,D_1,D_2)}
\Big\{(R_1,R_2): R_1\ge LZ(\hat{\bx}_1)/n-\Delta_1(q,n),\nonumber\\
&&\qquad\qquad R_1+R_2\ge \big[LZ(\hat{\bx}_1)+LZ(\hat{\bx}_2
|\hat{\bx}_1)\big]/n-\Delta_2(q,n)\Big\},
\end{eqnarray}
with $\Delta_1(q,n),\Delta_2(q,n)\to0$ as $q,n\to\infty$ (its
eqs.~(21)--(24)) --- the block-restart mechanism only shrinks the
achievable region further, it never grows it beyond $\mathcal
R^o(\bx)$.

In plain terms, this is what the next theorem shows: any rate pair
\cite{merhav244}'s own scheme achieves, via some particular
reconstruction $(\bxh_1,\bxh_2)$, is achieved by this paper's
type-free construction too, at least as well --- simply by adjusting
$\tau$ to match whichever $\bxh_1$ \cite{merhav244} would have used.
\cite{merhav244}'s own reconstruction is never reproduced, only the
threshold it implies.

\begin{theorem}[Two-stage domination over the finite-state-encoder
route] \label{thm:domination}
For any admissible reconstruction pair $(\bxh_1,\bxh_2)\in \mathcal
S_2(\bx,D_1,D_2)$ --- an arbitrary comparison target, entering only
through the threshold below, never as a search criterion --- setting
$\tau=\rho_2^*(\bxh_1)$ in Proposition~\ref{prop:typefree}'s
construction achieves
\begin{equation}
\rho_1(\bx)\;\le\;\frac1n LZ(\bxh_1)+o(1),\qquad
\rho_1(\bx)+\rho_2(\bx)\;\le\;\frac1n\big[LZ(\bxh_1)+LZ(\bxh_2|\bxh_1)\big]+o(1).
\end{equation}
\end{theorem}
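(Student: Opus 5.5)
The proof runs Proposition~\ref{prop:typefree} at $\tau=\rho_2^*(\bxh_1)$ and bounds its two terms, $-\log U[\mathcal G_\tau]$ and $n\tau$, separately.

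\emph{First term.} By construction $\bxh_1\in\mathcal S(\bx,D_1)$, and $\rho_2^*(\bxh_1)\le\tau$ holds with equality. Hence $\bxh_1\in\mathcal G_\tau$. By monotonicity of $U[\cdot]$ we get $U[\mathcal G_\tau]\ge U(\bxh_1)$. Since the normalizer of $U$ is at most one (Kraft, unique decodability of LZ78), $U(\bxh_1)\ge 2^{-LZ(\bxh_1)}$. Therefore $-\log U[\mathcal G_\tau]\le LZ(\bxh_1)$. Inserting this into the first bound of Proposition~\ref{prop:typefree} gives $L_1(\bx)\le LZ(\bxh_1)+(2+\epsilon)\log n+c$. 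Dividing by $n$ yields the first claim with an $O(\log n/n)$ slack.

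\emph{Second term.} The same reasoning applies, now to $V$. Since $\bxh_2\in\mathcal S(\bx,D_2)$, we have $V[\mathcal S(\bx,D_2)|\bxh_1]\ge V(\bxh_2|\bxh_1)$. Provided the conditional normalizer $\sum_{\bz}2^{-LZ(\bz|\bxh_1)}$ is at most one, this gives $V(\bxh_2|\bxh_1)\ge 2^{-LZ(\bxh_2|\bxh_1)}$. The normalizer bound holds by Kraft's inequality, because for fixed side information $\bxh_1$ the conditional LZ78 code of \cite{ziv85} is uniquely decodable in $\bxh_2$. Consequently
\[
n\tau=n\rho_2^*(\bxh_1)=-\log V[\mathcal S(\bx,D_2)|\bxh_1]\le LZ(\bxh_2|\bxh_1).
\]
Combining this with the first term in the sum-rate bound of Proposition~\ref{prop:typefree} gives
\[
L_1(\bx)+L_2(\bx)\le LZ(\bxh_1)+LZ(\bxh_2|\bxh_1)+(4+\epsilon)\log n+c .
\]
Dividing by $n$ yields the second claim.

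\emph{Main obstacle.} The only delicate point is the conditional Kraft step. It requires that $LZ(\cdot|\bxh_1)$, for each fixed $\bxh_1$, be the length function of a uniquely decodable (or prefix) code over $\Xh_2^n$. If the conditional scheme of \cite{ziv85} is only injective rather than uniquely decodable, I would first add an $O(\log n)$ length-prefix to make it prefix-free. This perturbs the bound only by $o(n)$, which is absorbed into the $o(1)$ term.

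A secondary check concerns Lemma~\ref{lem:universal-ach}: its hypothesis $V(\bz|\bxh_1)\ge B^{-n}$ must hold uniformly, with the constant independent of $\bxh_1$. This follows from a worst-case bound $LZ(\bz|\bxh_1)\le n(1+\epsilon_n)\log\hat K_2$, exactly as was done for $U$. No type argument is needed anywhere, consistent with the remark that the reconstruction of \cite{merhav244} enters only through the threshold $\tau$.
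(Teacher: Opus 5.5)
Your proposal is correct and matches the paper's proof essentially step for step. You place $\bxh_1$ in $\mathcal G_\tau$ to get $U[\mathcal G_\tau]\ge U(\bxh_1)\ge 2^{-LZ(\bxh_1)}$, then use $\bxh_2\in\mathcal S(\bx,D_2)$ and the same Kraft argument for $V(\cdot|\bxh_1)$ to get $n\rho_2^*(\bxh_1)\le LZ(\bxh_2|\bxh_1)$, and plug both into Proposition~\ref{prop:typefree}. The conditional-Kraft point you flag is the one the paper settles in its remark after the proof, by appealing to the unique decodability of Ziv's conditional LZ78 code; your fallback of an $O(\log n)$ length prefix would also suffice.
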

\begin{proof}
Proposition~\ref{prop:typefree}'s own construction searches for
$\mathcal G_\tau$-membership without ever examining $\bxh_1$ itself
--- only the threshold $\tau$ depends on it. Since
$\bxh_1\in \mathcal S(\bx,D_1)$ and $\rho_2^*(\bxh_1)\le\tau$ (equality,
by choice of $\tau$), $\bxh_1\in \mathcal G_\tau$, so
$U[\mathcal G_\tau]\ge U(\bxh_1)\ge2^{-LZ(\bxh_1)}$ (Kraft's inequality
gives $U$'s own normalizing sum $Z\le1$, and $U(\bxh_1)=2^{-LZ(\bxh_1)}/Z\ge2^{-LZ(\bxh_1)}$). By
Proposition~\ref{prop:typefree}, $\rho_1(\bx)\le-\frac1n\log
U[\mathcal G_\tau]+o(1)\le\frac1n LZ(\bxh_1)+o(1)$. For the sum rate, the
same Kraft argument applied to the conditional measure
$V(\cdot|\bxh_1)$ --- itself a uniquely-decodable code given the
known side information $\bxh_1$ --- gives $V[\mathcal S(\bx,D_2)|\bxh_1]\ge
V(\bxh_2|\bxh_1)\ge2^{-LZ(\bxh_2|\bxh_1)}$, so
$\rho_2^*(\bxh_1)\le\frac1n LZ(\bxh_2|\bxh_1)$; combined with
Proposition~\ref{prop:typefree}'s $\rho_1(\bx)+\rho_2(\bx)\le
-\frac1n\log U[\mathcal G_\tau]+\tau+o(1)\le\frac1n LZ(\bxh_1)+\rho_2^*(\bxh_1)+o(1)$, this gives the claim.
\end{proof}

This is exactly \cite{merhav233}'s own single-stage inequality (its
eq.~(51), $\min_{\hat{\bx}\in \mathcal S(\bx,D)}LZ(\hat{\bx})\ge-\log U[\mathcal S(\bx,D)]$),
applied once unconditionally and once more conditionally on whichever
$\bxh_1$ is in play --- no chain rule for LZ complexity is needed,
since each inequality is proved separately, mechanically, from Kraft
alone. Since Theorem~\ref{thm:domination} holds for \emph{every}
admissible pair, not merely the best one, this paper's achievable
region contains, coordinate by coordinate, every rate pair in
$\mathcal R^o(\bx)$ --- and since $\mathcal
R_q(\bx)\subseteq\mathcal R^o(\bx)$, it contains
\cite{merhav244}'s own true achievable region too, up to
\cite{merhav244}'s own vanishing $(q,n\to\infty)$ slack, without
needing any separate analysis of the block-restart mechanism: this
paper's region is a superset of \cite{merhav244}'s, generally a
strict one. This proof leans on the fact that Ziv's
conditional LZ78 construction, built on $\bxh_1$ as side
information available to both encoder and decoder, is itself a
uniquely-decodable code satisfying its own
Kraft inequality for each fixed $\bxh_1$ --- confirmed directly by
Ziv's own construction \cite{ziv85}, which builds an explicit,
uniquely-decodable conditional LZ78 coding scheme achieving exactly
this length function; standard, and implicit throughout
\cite{merhav244}'s own use of the same construction.

\section{Extension to $r>2$ stages}
\label{sec:mstage}

The two-stage construction of Sections~\ref{sec:types}--\ref{sec:lz}
generalizes to any fixed number $r\ge2$ of stages along lines that
are natural given the two-stage case. What follows is an informal
description of this extension, not a formal proof: as emphasized
throughout, the two-stage case is where this paper's actual
difficulty lives, and the picture below is offered as the natural
next step, with no claim to the same rigor.

\emph{The type-covering skeleton.} Fix an $r$-way joint type $W^\ell$
of $(X,\hat X_1,\dots,\hat X_r)$ meeting all $r$ distortion budgets
$D_1,\dots,D_r$. Exactly as in the two-stage case (itself a direct
translation of \cite{rimoldi94}'s classical characterization to
$\ell$-blocks), the type-conditional bound (Theorem~\ref{thm:conv}'s own
Kraft-counting argument, run once per prefix $k=1,\dots,r$, over the
$k$-tupled alphabet $\Xh_1\times\cdots\times\Xh_k$, each stage's own,
not assumed equal) and the achievability construction
(Theorem~\ref{thm:ach}'s own Stage-1/Stage-2 mechanism, run once per
stage, each conditioning on the exact history realized so far)
generalize with no new idea: at stage $k$, search for the first
candidate $\hat{\bx}_k$ completing the full joint type $W^\ell$
restricted to $(\bx,\hat{\bx}_1,\dots,\hat{\bx}_k)$, given the exact
$(\hat{\bx}_1,\dots,\hat{\bx}_{k-1})$ already fixed by the previous $k-1$
stages. This gives $\rho_1(\bx)+\dots+\rho_k(\bx)\to R_k^*(W^\ell)$ for every $k$
simultaneously, $R_k^*(W^\ell)$ the natural generalization of $R_1^*,R^*$
above to the cumulative rate through stage $k$.

\emph{The LZ-based realization.} At stage $k$, candidates for
$\hat{\bx}_k$ would be drawn i.i.d.\ from the measure proportional to
$2^{-LZ(\hat{\bx}_k|\hat{\bx}_1,\dots,\hat{\bx}_{k-1})}$ --- the conditional
LZ78 codelength of $\hat{\bx}_k$ given all $k-1$ previously
realized reconstructions as side information, generalizing
$U$ (stage~1, $k=1$, no prior history) and $V(\cdot|\bxh_1)$
(stage~2, conditional on the single prior reconstruction) to
conditioning on the full history. The same principle that resolves
the two-stage Jensen obstruction (Section~\ref{sec:types}) would
need to apply at every stage: $\hat{\bx}_k$'s search criterion must
guarantee, not merely make likely, that $\hat{\bx}_k$ keeps stage
$k+1$'s own rate low --- generalizing Stage~1's own joint-typicality
criterion
from looking one stage ahead to looking ahead across however many
stages remain, by requiring joint typicality of the full history
$(\bx,\hat{\bx}_1,\dots,\hat{\bx}_k)$ rather than of $(\bx,\hat{\bx}_1)$
alone.

\emph{The type-free realization: full lookahead by backward
recursion.} Sections~\ref{sec:typefree}--\ref{sec:tf-converse}'s
type-free scheme also generalizes, and settles the lookahead-depth
question just raised: one-step lookahead (stage $k$ guaranteeing only
stage $k+1$'s affordability) does \emph{not} suffice, because stage
$k+1$ itself must guarantee stage $k+2$'s affordability, so the
quantity stage $k$ needs to bound is stage $k+1$'s own
\emph{restricted} rate, not its naive, unrestricted one --- a
dependency running the wrong way for one-step-at-a-time chaining.
Full lookahead across the entire remaining chain is genuinely needed,
but it is a single \emph{backward} recursion, computed once as
design-time bookkeeping, not a search the running encoder performs.
Write $\hat{\bx}^k\dfn(\hat{\bx}_1,\dots,\hat{\bx}_k)$ for a history of length
$k$ ($\hat{\bx}^0$ empty), and $U_k(\cdot|\hat{\bx}^{k-1})$ for the
measure proportional to $2^{-LZ(\cdot|\hat{\bx}_1,\dots,\hat{
\bx}_{k-1})}$ just described ($U_1\dfn U$, no history; $U_2(\cdot|
\hat{\bx}_1)\dfn V(\cdot|\hat{\bx}_1)$, recovering Stage~2's own
measure). Concretely: define $\rho_r^*(\hat{\bx}^{r-1})\dfn-\frac1n\log
U_r[\mathcal S(\bx,D_r)|\hat{\bx}^{r-1}]$ (unrestricted --- nothing follows
stage $r$), and recursively, for $k=r-1,\dots,1$, given
$\tau_{k+1},\dots,\tau_{r-1}$ already fixed,
\begin{equation}
\mathcal G^{(k)}_{\tau_k}(\hat{\bx}^{k-1})\dfn \mathcal
S(\bx,D_k)\cap\{\hat{\bx}_k:
\rho^{*\to}_{k+1}(\hat{\bx}^{k-1},\hat{\bx}_k)\le\tau_k\},\qquad
\rho^{*\to}_{k+1}(\hat{\bx}^k)\dfn-\frac1n\log
U_{k+1}\big[\mathcal G^{(k+1)}_{\tau_{k+1}}(\hat{\bx}^k)|\hat{\bx}^k\big],
\end{equation}
generalizing $\mathcal G_\tau$ ($k=1$ case, once $\tau_1$ is fixed as below)
to every stage. Since $\hat{\bx}_1\in \mathcal G^{(1)}_{\tau_1}$ or $\hat{
\bx}_k\in \mathcal G^{(k)}_{\tau_k}(\hat{\bx}^{k-1})$ are, by this definition,
statements about the specific, already-realized $\hat{\bx}_k$ ---
checked before acceptance, not averaged over how it arose --- the
guarantee never has to survive the randomness of how the previous
stage's search turned out; this is the same mechanism that resolves
the $r=2$ Jensen obstruction, applied once per stage rather than
once. A downward induction (from $k=r-1$ to $k=1$) shows the entire
type-match set $\mathcal T(W_{\le k}^\ell|\bx)$ is absorbed into
$\mathcal G^{(k)}_{\tau_k(W^\ell)}$ once
$\tau_k(W^\ell)\dfn R_{k+1}^*(W^\ell)-R_k^*(W^\ell)+\Delta_n^{(k)}(\ell)$
--- $\Delta_n^{(k)}(\ell)\to0$ the natural, conditioning-depth-$k$
generalization of $\Delta_n'(\ell)$ (itself the depth-$1$ case) ---
mirroring Lemma~\ref{lem:uniform-cov} at every conditioning depth; the
resulting rates telescope exactly,
$\rho_1(\bx)+\dots+\rho_k(\bx)\to R_k^*(W^\ell)$ for every $k$
simultaneously, matching Theorem~\ref{thm:tf-match}'s $r=2$ result at
every stage, not merely at the last one.

\emph{What remains.} The lookahead-depth question raised above is
settled by the construction itself --- full lookahead, via backward
recursion, with no obstruction of a new character, the same
mechanism as $r=2$ repeated $r-1$ times rather than found anew. One
place this construction might look, at first, like it leans on
something beyond what the two-stage case already uses: Stage $k$'s
search needs Lemma~\ref{lem:condF0}'s codelength bound with the
side information taken jointly from $k-1$ prior reconstructions, not
the single $\hat{\bx}_1$ the lemma is stated for. But conditioning on a
fixed-length tuple $(\hat{\bx}_1,\dots,\hat{\bx}_{k-1})$ is conditioning on
\emph{one} sequence over the product alphabet $\hat
X_1\times\dots\times\hat X_{k-1}$ --- finite, since each factor is
finite and $k-1\le r-1$ is bounded independent of $n$ --- and nothing
in Lemma~\ref{lem:condF0}'s own derivation uses any structural
property of the conditioning alphabet beyond its being finite. So
this is the same lemma, applied once with the conditioning
alphabet relabeled, not a separate instance needed at each depth ---
no additional assumption, at any $k$, beyond the one the $r=2$ case
already carries.

\section{Summary and Conclusion}
\label{sec:discussion}

This paper extended the single-stage, LZ-based random-coding
approach of \cite{merhav233} to two-stage successive refinement of
individual sequences. Section~\ref{sec:types} established a
complete direct/converse match in the language of type classes ---
a two-stage analogue of Rimoldi's classical characterization ---
culminating in a converse (Theorem~\ref{thm:conv-region}) that binds
every code, not merely type-committed ones. Section~\ref{sec:lz}
realized the same rates via an LZ78-based random-coding scheme,
universal across source statistics and distortion measures, with no
block-length parameter built into the codebook itself.
Sections~\ref{sec:typefree}--\ref{sec:tf-converse} then gave a fully
type-free construction, reaching every point of the achievable
region without ever committing to a type --- matching the same
converse exactly, though at a real search-cost premium over the
type-committed route.

This type-free flexibility turns out not to enlarge the achievable
rate region itself: the type-free and type-committed constructions reach
exactly the same points. The gain is elsewhere.
Section~\ref{sec:domination} showed the type-free construction
dominates the finite-state-encoder approach of \cite{merhav244}: any
rate pair that scheme achieves is achieved by this paper's own
construction too, simply by adjusting the search threshold
accordingly --- extending \cite{merhav233}'s own single-stage
domination result to two stages. Section~\ref{sec:mstage} extended
the type-free construction to a general, fixed number $r>2$ of
stages, via a single backward recursion, with no obstruction of a
new character beyond what the two-stage case already resolves.

\appendix
\setcounter{equation}{0}
\renewcommand{\theequation}{A.\arabic{equation}}
\section{Proof of the universal random-coding achievability lemma}
\label{app:proofs}

This proof is closely modeled on \cite{merhav233}'s
own achievability construction (its Theorem~2).

\begin{proof}[Proof of Lemma~\ref{lem:universal-ach} (Universal random-coding achievability)]
This follows the same construction as \cite{merhav233}'s own
achievability theorem (its Theorem~2), generalized in two respects:
the target, from \cite{merhav233}'s own specific family of
distortion balls $\mathcal S(\bx,D)$ to an arbitrary target family
$\mathcal E(\bx)$; and the measure, from the specific $U$ to an
arbitrary $\mu$ meeting the stated hypothesis ($\mu(\bxh)\ge
B^{-n}$ for every $\bxh$ with $\mu(\bxh)>0$). Two further
differences below are purely mechanical: the expectation bound uses
the untruncated geometric mean directly, where \cite{merhav233}
carries the $A^n$ truncation through an explicit sum to the same
conclusion; and the worst-case LZ bound needed for search failure is
drawn from Lemma~\ref{lem:F0}, already established earlier in this
paper, rather than re-derived from the counting-sequence
construction \cite{merhav233} uses for the same purpose.

\emph{Index encoding.} Given the codebook $\hat{\bx}_1,\dots,\hat{\bx}_{A^n}$
(drawn once, i.i.d.\ $\sim\mu$, then revealed to encoder and decoder
alike) and a source $\bx$ with target $\mathcal E(\bx)$, let $I(\bx)$ be the
index of the first codeword landing in $\mathcal E(\bx)$, or $I(\bx)=A^n$ if
none does. Encode $I(\bx)$ using the length function
$L(\bx)=-\log u[I(\bx)]$ for $u[i]=(1/i)/\sum_{k=1}^{A^n}(1/k)$,
$i=1,\dots,A^n$ --- a genuine probability distribution on
$\{1,\dots,A^n\}$, so $L$ is realizable by a valid prefix-free code.
Since $\sum_{k=1}^{A^n}1/k\le\ln(A^n)+1=n\ln A+1$,
\begin{equation}
L(\bx) \le \log I(\bx) + \log(n\ln A+1) \le \log I(\bx)+\log n+c, \qquad c\dfn\log(\ln A+1).
\end{equation}

\emph{A single-target expectation bound.} For fixed $\bx$, if all
codewords are drawn i.i.d.\ $\sim U$, then for any positive integer
$N$, $\Pr[I(\bx)>N]=(1-U[\mathcal E(\bx)])^N\le e^{-N\cdot U[\mathcal E(\bx)]}$. By
Jensen's inequality (once), writing $p\dfn U[\mathcal E(\bx)]$,
\begin{equation}
E[\log I(\bx)] \le \log E[I(\bx)] = \log(1/p) = -\log(U[\mathcal E(\bx)]),
\end{equation}
using the standard geometric-distribution fact $E[I]=1/p$ (treating
$I(\bx)$, before the truncation at $A^n$, as the index of the first
success among i.i.d.\ Bernoulli($p$) trials). Combining with the
encoding bound above,
\begin{equation}
E[L(\bx)] \le -\log(U[\mathcal E(\bx)])+\log n+c \dfn L^+(\bx).
\end{equation}

\emph{From one $\bx$ in expectation to every $\bx$ simultaneously.}
Define, exactly as in \cite{merhav233}'s own eq.~(42) --- with the
excess threshold $(1+\epsilon)\log n$, not merely $\epsilon\log n$;
this distinction matters and is explained below ---
\begin{equation}
E_n \dfn E\Big\{\max\Big(\max_{\bx\in\mathcal X^n}\bone\{I(\bx)=A^n\mbox{ fails}\},\ \big[\max_{\bx\in\mathcal X^n}\big(L(\bx)-L^+(\bx)-(1+\epsilon)\log n\big)\big]_+\Big)\Big\},
\end{equation}
the expectation over the random codebook. If $E_n\to0$, then for
some $N=N(\epsilon,A,B)$ and every $n>N$, some codebook realization
makes both terms inside the
max simultaneously below $1$ for every $\bx$ --- the first term is
$0$ or $1$-valued, so below $1$ means the search succeeds for every
$\bx$, and the second gives $L(\bx)\le L^+(\bx)+(1+\epsilon)\log n =
-\log(U[\mathcal E(\bx)])+(2+\epsilon)\log n+c$ for
every $\bx$, which is the claimed bound. It remains to show $E_n\to0$;
since the max of two nonnegative terms is at most their sum, bound
each separately.

\emph{Search failure, uniformly.} By the union bound over $\bx$ and
$\Pr[I(\bx)=A^n\mbox{ fails}]=(1-p)^{A^n}\le e^{-A^np}$,
\begin{equation}
E\Big\{\max_{\bx}\bone\{\mbox{failure}\}\Big\} \le \sum_{\bx\in\mathcal X^n}e^{-A^nU[\mathcal E(\bx)]}.
\end{equation}
Since $U[\mathcal E(\bx)]\ge U(\hat{\bx}_0)$ for any single $\hat{\bx}_0\in
\mathcal E(\bx)$ (assuming $\mathcal E(\bx)\ne\emptyset$, else the achievability claim
is vacuous), and $U$'s normalization together with the LZ78
codelength bound (Lemma~\ref{lem:F0},
$LZ(\hat{\bx})\le n(1+\epsilon_n)\log \hat K$ for the worst case, exactly
as in \cite{merhav233}) gives $U[\mathcal E(\bx)]\ge2^{-n(1+\epsilon_n)\log \hat K}$
uniformly, this sum is at most $|\mathcal
X|^n\exp\{-2^{n[\log A-(1+\epsilon_n)\log \hat K]}\}\to0$ doubly
exponentially fast whenever $A>\hat K$, exactly as in
\cite{merhav233}'s own Theorem~2.

\emph{Excess codelength, uniformly --- why the threshold must be
$(1+\epsilon)\log n$, not $\epsilon\log n$.} As in \cite{merhav233},
\begin{equation}
E\Big\{\big[\max_{\bx}(L(\bx)-L^+(\bx)-(1+\epsilon)\log n)\big]_+\Big\}
\le\sum_{\bx\in\mathcal X^n}\int_0^{n\log A}\Pr\Big[I(\bx)\ge\frac{2^{(1+\epsilon)\log n+s}}{U[\mathcal E(\bx)]}\Big]\,\mathrm ds
\end{equation}
\begin{equation}
\le\sum_{\bx\in\mathcal X^n}\int_0^{n\log A}\exp\{-2^s n^{1+\epsilon}\}\,\mathrm ds
\le|\mathcal X|^n(n\log A)e^{-n^{1+\epsilon}}\to0,
\end{equation}
using the geometric tail bound $\Pr[I(\bx)\ge N]\le
e^{-Np}\le\exp\{-2^{(1+\epsilon)\log n+s}\cdot U[\mathcal E(\bx)]\}$ with
$N=2^{(1+\epsilon)\log n+s}/U[\mathcal E(\bx)]$, evaluated at $s=0$ for the
final step. This is precisely why the threshold cannot merely be
$\epsilon\log n$: that would give $|\mathcal
X|^n\exp\{-n^\epsilon\}$ in place of $|\mathcal
X|^n\exp\{-n^{1+\epsilon}\}$, and $n^\epsilon$ grows \emph{slower}
than $n\log|\mathcal X|$ for $\epsilon<1$, so this weaker bound
\emph{diverges} rather than vanishing --- the polynomial threshold
$n^{1+\epsilon}$, not just $n^\epsilon$, is what is needed to
dominate the $|\mathcal X|^n$ union-bound factor. Both terms
$\to0$, so $E_n\to0$, completing the proof --- but only for
$n>N(\epsilon)$: it is exactly this term,
$|\mathcal X|^n(n\log A)e^{-n^{1+\epsilon}}$, whose rate of decay
sets $N(\epsilon)$, since smaller $\epsilon$ needs larger $n$ before
$n^{1+\epsilon}$ overtakes the union-bound factor $|\mathcal X|^n$,
so $N(\epsilon)\to\infty$ as $\epsilon\to0^+$; at $\epsilon=0$
exactly this same term becomes $|\mathcal X|^n(n\log A)e^{-n}$,
which diverges rather than vanishes whenever $|\mathcal X|\ge3$, and
the argument breaks down outright.
\end{proof}

\section{Proof of the Type-Conditional Rate Bound}
\label{app:thm1}
\setcounter{equation}{0}
\renewcommand{\theequation}{B.\arabic{equation}}

The proof uses that any single reconstruction can be
responsible for at most
$|\mathcal T_n(P^\ell)\cap\mathcal T(W_1^\ell|\bz)|$ sources of type
$W_1^\ell$, itself already given by
Lemma~\ref{lem:dc}'s type-match extension, to convert a bound on the
fraction of \emph{codewords} into one on the fraction of
\emph{source sequences} directly. This observation is used
throughout the proof that follows.

\begin{proof}[Proof of Theorem~\ref{thm:conv} (Type-conditional rate bound)]
Fix the type $W_1^\ell$; the argument for $W^\ell$ (total rate) is
identical with $W_1^\ell$ replaced by $W^\ell$, $Q_1^\ell$ by
$W^\ell$'s own $(\hat X_1,\hat X_2)$-marginal, and the reconstruction
alphabet $\Xh_1$ by the doubled alphabet $\Xh_1\times\Xh_2$ (treating the
pair $(\bxh_1,\bxh_2)$ as a single reconstruction), so we give the
single argument and combine at the end.

\emph{A codeword-counting bound.} Throughout, $\bz$ denotes a generic
candidate ranging over $\mathcal T_n(Q_1^\ell)$, kept distinct from
$\bxh_1$, reserved below for $\Phi$'s own, specific output on a
given $\bx$. By Lemma~\ref{lem:dc}'s type-match
extension, writing $\mathcal T(W_1^\ell|\bz)\dfn\{\bx:
\hat P_{\bx\bz}=W_1^\ell\}$ for the reverse-direction
type-match set, $|\mathcal T_n(P^\ell)\cap\mathcal T(W_1^\ell|\bz)|=|\mathcal T_n(P^\ell)|\cdot
U_{Q_1^\ell}[\mathcal T(W_1^\ell|\bx)]$ for every $\bz\in
\mathcal T_n(Q_1^\ell)$ and every $\bx\in \mathcal T_n(P^\ell)$; write $u_1\dfn
U_{Q_1^\ell}[\mathcal T(W_1^\ell|\bx)]$ for this common value (a
single number, the same for every choice of $\bz$ and $\bx$ in their
respective type classes, by Lemma~\ref{lem:dc}'s own symmetry). Fix
$\ell_0>0$, and let $\mathcal Z_{\ell_0}\dfn\{\bz:
\bz=\psi_1(\phi_1(\bx))\mbox{ for some }\bx\mbox{ with }\hat
P_{\bx\bz}=W_1^\ell\mbox{ and }|\phi_1(\bx)|<\ell_0\}$ --- $\bx\in
\mathcal T_n(P^\ell)$ and $\bz\in \mathcal T_n(Q_1^\ell)$ are automatic,
already implied by $\hat P_{\bx,\bz}=W_1^\ell$ --- the set of
type-$Q_1^\ell$
reconstructions occurring as outputs of type-$W_1^\ell$
sources with codeword length below $\ell_0$. The claim follows from
four statements.

\emph{(i) $|\mathcal Z_{\ell_0}|<2^{\ell_0}$.} The first-stage decoder $\phi_1$ assigns each
distinct reconstruction its own binary string,
so $\mathcal Z_{\ell_0}$ injects into the binary
strings of length $<\ell_0$, of which there are
$\sum_{k<\ell_0}2^k=2^{\ell_0}-1$.

\emph{(ii) For every $\bz\in \mathcal T_n(Q_1^\ell)$,
$|\{\bx\in \mathcal T_n(P^\ell):\hat P_{\bx\bz}=W_1^\ell,\
\psi_1(\phi_1(\bx))=\bz\}|\le|\mathcal T_n(P^\ell)|\cdot u_1$ ($u_1\dfn
U_{Q_1^\ell}[\mathcal T(W_1^\ell|\bx)]$, as above).} Every such
$\bx$ satisfies $\hat P_{\bx\bz}=W_1^\ell$ by definition of
the set being counted, so this set is contained in
$\mathcal T_n(P^\ell)\cap\mathcal T(W_1^\ell|\bz)$, of size
$|\mathcal T_n(P^\ell)|\cdot u_1$ by the opening display.

\emph{(iii) The sets $\{\bx\in \mathcal T_n(P^\ell):\psi_1(\phi_1(\bx))=\bz\}$,
$\bz\in\mathcal Z_{\ell_0}$, are pairwise disjoint.} $\psi_1\circ\phi_1$
is a function of $\bx$, so each $\bx$ belongs to exactly one such set.

\emph{(iv) Combining (i)--(iii):}
\begin{eqnarray}
|\{\bx\in \mathcal T_n(P^\ell): (\bx,\bxh_1)\mbox{ has type }W_1^\ell,\ L_1(\bx)<\ell_0\}|
&=& \sum_{\bz\in\mathcal Z_{\ell_0}}|\{\bx:\hat P_{\bx\bz}=W_1^\ell,\ \psi_1(\phi_1(\bx))=\bz\}|\nonumber\\
&\le&|\mathcal Z_{\ell_0}|\cdot|\mathcal T_n(P^\ell)|\cdot u_1
\ <\ 2^{\ell_0}|\mathcal T_n(P^\ell)|u_1,
\end{eqnarray}
the first equality since a union of pairwise disjoint sets (iii) has
size equal to the sum of their sizes, the first inequality by (ii)
applied to each term, and the last by (i).

Choosing $\ell_0=\log_2(1/u_1)-\epsilon\log_2 n$, so
$2^{\ell_0}=n^{-\epsilon}/u_1$, this bound becomes exactly
$n^{-\epsilon}|\mathcal T_n(P^\ell)|$: at most an $n^{-\epsilon}$ fraction of
$\bx\in \mathcal T_n(P^\ell)$ can have output type $W_1^\ell$ and
$L_1(\bx)<-\log_2(u_1)-\epsilon\log_2 n = nR_1^*(W^\ell)-\epsilon\log n$.
Equivalently, for all but at most an $n^{-\epsilon}$ fraction of
$\bx\in \mathcal T_n(P^\ell)$, every code whose output on this $\bx$ has type
$W_1^\ell$ satisfies the first bound.

The identical argument on the doubled alphabet (reconstruction
$(\bxh_1,\bxh_2)$, type $W^\ell$, target $\mathcal T(W^\ell|\bx)$,
total codeword length $L_1+L_2$, since $(\phi_1,\phi_2)$ jointly
determine one binary string of this length from which
$(\bxh_1,\bxh_2)$ is recovered) excludes a second set of at most an
$n^{-\epsilon}$ fraction of $\bx\in \mathcal T_n(P^\ell)$. By the union bound,
all but at most a $2n^{-\epsilon}$ fraction of $\bx\in \mathcal T_n(P^\ell)$
avoid both exceptional sets, giving both bounds simultaneously
whenever the code's output on such an $\bx$ has joint type $W^\ell$
(which forces its own $(X,\hat X_1)$ sub-type to be $W_1^\ell$, so
the first bound is exactly the marginal case of the second).
\end{proof}

\end{document}